\begin{document}

\newcommand{\reals}{\mathbb{R}}
\newtheorem{theorem}{Theorem}[section]
\newtheorem{corollary}[theorem]{Corollary}
\newtheorem{lemma}[theorem]{Lemma}
\newtheorem{proposition}[theorem]{Proposition}
\newtheorem{claim}[theorem]{Claim}
\newtheorem{definition}[theorem]{Definition}
%%\newtheorem{alg}[thm]{Algorithm}
%%\newtheorem{sub}{Subroutine}
%

%\begin{}
\title{Reallocation Mechanisms}

\author{Liad Blumrosen \and Shahar Dobzinski}

%\author{
%LIAD BLUMROSEN
%\affil{Hebrew University of Jerusalem}
%SHAHAR DOBZINSKI
%\affil{Weizmann Institute of Science}
%}
\maketitle

\begin{abstract}
We consider reallocation problems in settings where the initial endowment of each agent consists of a subset of the resources. The private information of the players is their value for every possible subset of the resources. The goal is to redistribute resources among agents to maximize efficiency. Monetary transfers are allowed, but participation is voluntary.

We develop incentive-compatible, individually-rational and budget balanced mechanisms for several classic settings, including bilateral trade, partnership dissolving, Arrow-Debreu markets, and combinatorial exchanges. All our mechanisms (except one) provide a constant approximation to the optimal efficiency in these settings, even in ones where the preferences of the agents are complex multi-parameter functions.
\end{abstract}

\thispagestyle{empty}\newpage\setcounter{page}{1}

\section{Introduction}

One of the most fundamental problem in economics is to determine how to allocate scarce resources. Initially, resources may be inefficiently distributed among agents. However, as agents value resources differently, they may want to trade to improve their well being.

When every agent seeks to maximize his own utility, classic economic theory generally predicts the existence of an ``invisible hand'': agents will trade among themselves to maximize their own utility, and will eventually arrive at an efficient resource allocation. This paradigm fails, however, in the presence of asymmetric information. Private information may lead to market failures, where trade fails to take place even when it is desirable for all. One influential example is Akerlof's market for used cars \cite{Ake70}, where the unraveling of markets leads to no trade at all. In this paper we aim to design mechanisms that will foster trade in such markets, even if resources are distributed among multiple agents with different interests and partial information.

In exchange economies, where each agent is initially endowed with some resources, agents simultaneously play the role of buyers and sellers. Exchange economies are related to many real life environments; Individuals hold assets, like real estate, cars and stocks,
for which other individuals have their own preferences.
Firms hold other types of assets (e.g., employees, land, machines, intellectual property) that may possibly be better assigned if more information becomes available.
Numerous examples from the realms of industrial organization and finance fall into this model, like the dissolving of partnerships, breaking monopolies and other anti-trust acts, and merges and acquisitions. Of particular interest are structured markets where trade can be coordinated by centralized mechanisms.
One recent example for a centralized large-scale reallocation mechanism is the FCC's attempt to reallocate frequencies currently held by TV broadcasters to wireless phone companies (see \cite{SM13}). A major challenge in these FCC two-sided auctions is to provide incentives for the TV broadcasters to relinquish their licenses (see also \cite{ALMS13}).

Our main goal is to design markets that allow an efficient reallocation of resources. Technically, this translates to three requirements. The first one is \emph{individually rationality}: the participation of the agents is voluntary and at any point they may leave the market and consume their initial endowments. Thus, the outcome of any individually rational mechanism is a Pareto improvement in the economy, where agents are expected to be (weakly) better off in the new allocations. The second requirement is \emph{budget balance}: the mechanism is not allowed to subsidize the agents in order to improve the outcome. We distinguish between weakly and \emph{strongly budget balanced} mechanisms: in the latter the mechanism is additionally not allowed to burn money\footnote{The budget balance requirement is common in the cost-sharing literature (e.g., \cite{M99} and \cite{RS06}) but there the idea is to charge the participants an amount that suffices to cover the cost of providing the service.}.

The third requirement is \emph{truthfulness}. We discuss both Bayesian and prior-free models, but all our mechanisms admit ex-post dominant strategy equilibria (i.e., \emph{universal} truthfulness). Even when distributional assumptions are made, we make minimal use of this knowledge, namely we only require access to  statistical properties like the edowment's median value\footnote{In fact, using noisy estimations of the medians decreases the performance of our mechanism in a rate proportional to the noise. Hence, even if we only have a black box access to the distributions, we can use the black box to estimate the medians within an arbitrary precision, and preserve very similar performance guarantees.}.

%We design mechanisms for reallocation problems in several exchange economies. We prove guarantees on the social welfare achieved by the mechanisms, and show that they satisfy desired economic properties.

Our reallocation problems are essentially combinatorial auctions where items are initially held by the players (and not by the auctioneer as usual), hence generalizing models of double auctions double auctions (see, e.g., \cite{RSW94,SW02,CS06,FMS07, DRT14}). This adds another layer of complexity; For example, while VCG can always be used to maximize the welfare in combinatorial auctions, in the presence of endowments no truthful mechanism can allocate efficiently and remain budget balanced, %liad added April 2014
as we will shortly see.

The challenges one faces in designing reallocation mechanisms reveal themselves even when analyzing the simplest environment considered in this paper, \emph{bilateral trade}. We then build upon the insights we gain from this simple setting and develop mechanisms for more complex environments.

\subsubsection*{Bilateral Trade and Partnership Dissolving}

%This paper studies several environments with varying complexity, but

Consider a seller that holds a single indivisible good and a buyer. The seller's value for the good $v_s$ is drawn from some distribution $D_s$, and the buyer's value $v_b$ is independently drawn from $D_b$. In efficient markets the seller will sell the item if and only if $v_s< v_b$. Any price between these values will clear the market and support an efficient allocation. However, finding this price may be challenging, as the parties will try to influence the sale price by their bids.

A seminal impossibility result by Myerson and Satterthwaite \cite{MS83} shows that no mechanism can simultaneously achieve full efficiency, individual rationality and budget balance in (Bayes Nash) equilibrium. Our first result bounds the loss of efficiency in bilateral trade: it considers the \emph{Median Mechanism} that simply sets a trade price $p$ that equals to the median of the distribution of the seller; The seller sells the item at price $p$ to the buyer if and only if $v_s<p\leq v_b$. Notice that this mechanism is truthful, (strongly) budget balanced, and individually rational.

\vspace{0.1in} \noindent \textbf{Theorem:} The expected welfare of the median mechanism is at least half of the expected value of the efficient allocation. I.e., the median mechanism is a $2$ approximation to the optimal welfare.

\vspace{0.1in} \noindent We note that a similar mechanism that posts the median of the buyer's value does not provide any bounded approximation. However, by carefully analyzing both distributions we show how to find a trade price that gives an approximation ratio strictly better than $2$. The mechanism is inspired by a neat recent work by McAfee \cite{mcA08} who showed that posting any price between the medians of the two agents achieves in expectation a $2$ approximation to the gain from trade (defined to be $\max(0,v_b-v_s)$), but only if the median of the buyer is at least the median of the seller.

Notice that our efficiency benchmark in this paper is the value of the efficient allocation, i.e., the allocation that maximizes the social welfare, sometimes also called the \emph{first-best} solution. This allocation is clearly infeasible when players behave strategically \cite{MS83}.\footnote{An alternative approach -- when the distribution of preferences is known to the designer -- is to run the \emph{second-best} mechanism, i.e., \emph{the} mechanism that maximizes efficiency subject to the individual rationality and budget balance constraints. While this approach might be feasible in some very simple settings where the structure of the second-best mechanism is well understood (like bilateral trade with monotone-hazard rate distributions), very little is known on the structure of such mechanisms in the more complicated models we study later. This holds in particular for the multi-parameter environments we consider, for which a characterization of optimal mechanisms is a big open question.}

Next, we consider the more general setting of \emph{Partnership Dissolving}. Here, fractions of a divisible item are held by $n$ agents who have different (linear) preferences for this item. The goal is to reallocate the items in a way that maximizes the welfare, which, in this setting, is equivalent to giving all the good to the player with the highest value. Cramton, Gibbons and Klemperer \cite{CGK87} show that if the shares are close enough to equal holdings, there exists a fully efficient mechanism (see also \cite{mcA92}, and a survey \cite{Mol01} ). Their mechanism is Bayes Nash incentive compatible and interim individually rational. In contrast, we show that:

\vspace{0.1in} \noindent \textbf{Theorem:}
\begin{itemize}
\item Suppose that the shares are close enough to equal holdings. There exists a dominant strategy, ex-post individually rational, strongly budget balanced mechanism whose efficiency loss approaches $0$ as the number of players grows. This mechanism does not require any distributional knowledge on the values of the agents.

\item For any arbitrary initial division of shares, there exists a dominant strategy, ex-post individually rational, strongly budget balanced mechanism that provides a better-than-$2$ approximation to the optimal social welfare.
\end{itemize}

\vspace{0.1in} \noindent The second mechanism is obtained by introducing a general reduction: any $\alpha$-approximation mechanism for bilateral trade yields an $\alpha$-approximation for partnership dissolving, regardless of the initial division of shares.

Both bilateral trade and partnership dissolving are single-parameter domains. We now move on to study more complex multi-parameter domains. Indeed, our most technical constructions are developed for the next two multi-parameter environments.

\subsubsection*{Combinatorial Exchanges}

We first consider \emph{combinatorial exchanges}: a set of indivisible heterogeneous items that is initially distributed among the agents so that each agent $i$ holds a (possibly empty) subset $E_i$ of the items. Our players have combinatorial multi-parameter preferences over sets of items. Let $H_n$ denote the $n$'th harmonic number, and let $t=\max_i|E_i|$ (i.e., the maximal number of items held by a single player). We are able to show that:

\vspace{0.1in} \noindent \textbf{Theorem:} There exists a truthful, individually rational, weakly budget balanced, randomized mechanism that provides an $8H_{t}$-approximation to the optimal welfare if all valuations are subadditive\footnote{A valuation $v$ is subadditive if for ever two bundles $S$ and $T$ we have that $v(S)+v(T)\geq v(S\cup T)$.}. The only distributional knowledge that the mechanism uses is the median value of the distribution of the endowment of each player.

\vspace{0.1in} \noindent In particular, if each bidder is initially endowed with at most one item we get an $8$-approximation.

To gain some intuition about the mechanism, let us consider the simpler setting in which one bidder $i$ initially holds all items (that is, $E_i=M$). Let $MED_i$ denote the median of the value of the distribution of $v_i(E_i)$. We could have hoped to have the following generalization of the median mechanism for bilateral trade: bidder $i$ will report us if he is ready to sell all items for a price of $MED_i$. If he agrees, we will use VCG to find an optimal allocation of the items to all bidders but bidder $i$, and use the revenue generated from VCG to pay an amount of $MED_i$ to bidder $i$. If bidder $i$ does not agree, he keeps the items. Notice that the approximation ratio of this (incorrect) procedure is constant: if most of the expected optimal welfare is contributed by bidder $i$ then by doing nothing we already get a $2$ approximation, and the outcome of any valid mechanism is a Pareto improvement. On the other hand, if most of the expected optimal welfare is contributed by the rest of the bidders, then we get a $4$ approximation: with probability $\frac 1 2$ bidder $i$ sells his endowment, and in that case we allocate the items optimally among all bidders but bidder $i$.

Of course, the procedure above fails because we cannot guarantee that the revenue will from the VCG will be at least $MED_i$. To handle this, we develop a ``revenue extracting'' procedure which is the combinatorial auctions analogue of a second-price auction with reserve price. In a second-price auction the auctioneer can put a reserve price $r$ to guarantee revenue of at least $r$ if the highest value is at least $r$. We show that in a combinatorial auction with $n$ players there exists a (deterministic, prior-free) mechanism that guarantees a revenue of at least $r$ if the optimal welfare is at least $H_n\cdot r$.\footnote{If the optimal welfare is smaller than $H_n\cdot r$, then the mechanism is not required to allocate the items, but if it does so the revenue is guaranteed to be at least $r$.}

Our mechanism (for the special case) now looks as follows: allow bidder $i$ to sell all his items at price $MED_i$. If bidder $i$ agrees, we use the revenue extraction procedure with a ``reserve price'' of $r=MED_i$ to distribute the items among all bidders but bidder $i$, and use the revenue to pay $MED_i$ to bidder $i$. The mechanism for the general case can be found in Section \ref{sec-combinatorial}.

%\subsubsection*{Prior-Free Mechanisms: Arrow-Debreu Markets and Trade Networks} %liad changed April 2014
\subsubsection*{Prior-Free Mechanisms: Arrow-Debreu Markets}

Our second main technical construction considers the classic exchange model of Arrow and Debreu \cite{AD54}. We have a single divisible good, but the valuations are no longer linear as in partnership dissolving, but can be any function with decreasing marginals (i.e., concave valuations). An easy adaption of the mechanism for combinatorial exchanges guarantees a constant approximation, but our challenge now is to get rid of the distributional assumptions and develop \emph{prior free} mechanisms with a constant approximation ratio in the worst case.

From a technical perspective this is a multi-parameter environment for which our machinery for developing truthful mechanisms (especially prior free ones) is limited. Yet, to our surprise we were able to utilize ideas from the mechanism for combinatorial exchanges and come up with a constant-approximation mechanism that does not make any distributional assumptions.

\vspace{0.1in} \noindent \textbf{Theorem:} There exists a truthful, prior-free, individually rational, weakly budget balanced, randomized mechanism that provides a constant approximation to the optimal social welfare, as long as no player is initially endowed with more than $\frac 1 3$ of the good.

\vspace{0.1in} \noindent Notice the necessity of the last condition: in markets when, say, one player initially holds all the good, in the spirit of \cite{MS83} no prior-free mechanism with a constant approximation ratio exists.

A key idea in the mechanism is to replace the revenue extraction procedure that was used in the mechanism for combinatorial exchanges with a more subtle one, that will allow us to take advantage of the specifics of the setting and get a constant approximation ratio. Consider the special case where one agent $i$ holds all the good. Now, since we assume no distributional knowledge, we do not know the median value of the endowment of $i$, but let us assume for now that we know instead the ``mid-supply'' price: the price per fraction $p$ for which $i$ prefers to sell exactly half of his endowment. The crux is that since the valuation of $i$ exhibits decreasing marginals, agent $i$ will agree to sell any amount smaller than half of his endowment at the price. Thus, if we knew that mid-supply price we could just run VCG with the rest of the agents as well as an additional dummy bidder that has a value per fraction of $p$ for any amount below half of the endowment of $i$. Observe that if some bidder $i'$ was allocated fraction $x$ of the good, the VCG payment formula implies that his payment his at least $x\cdot p$ (since otherwise the dummy bidder can get an additional amount $x$ of the good). We can now take an amount of $x$ from agent $i$, assign it to $i'$ and pay agent $i$ $x\cdot p$.

\subsubsection*{Conclusions and Future Directions}

In this paper we devise welfare maximizing reallocation mechanisms. Almost all of our mechanisms provide a constant approximation ratio to the welfare-maximizing allocation, but we do not know whether these constants are optimal. In particular, proving impossibilities on the power of truthful and budget balanced mechanisms for reallocation problems is an interesting open question.

Our focus in this paper was not computational complexity, but it turns out that all of our mechanisms run in polynomial time, except the mechanism for combinatorial exchanges (see \cite{KSS04,BBK08} for computational issues in combinatorial exchanges).
Developing a polynomial time mechanism for the latter setting seems hard as in particular it implies a solution to the notorious problem of developing truthful polynomial time algorithm for combinatorial auctions with subadditive (and submodular) bidders (see, e.g., \cite{Dob11}, \cite{DV11}, and \cite{DV12}).
%Our mechanism for combinatorial exchanges also requires that the endowment of each bidder consists of at most one item. Removing this condition seems to be a challenging question.

Finally, there are environments in which even the individual-rationality requirement can be relaxed. For example, regulators may force firms to participate in some markets even if it may hurt them (breaking monopolies and other anti-trust procedures fall into this category). Modeling such environments is an interesting question that may possibly lead to some practical insights.

\subsubsection*{Organization}

Section \ref{sec:model} defines a general framework which captures all settings we discuss. Instantiations of this framework are studied next: bilateral trade and partnership dissolving in Section \ref{sec-bilateral}, combinatorial exchanges in Section \ref{sec-combinatorial} and Arrow-Debreu markets in Section \ref{sec-multiunit}.
%, and trade networks in Section \ref{sec-network}. liad changed April 2014

\section{The Framework}
\label{sec:model}

Consider a set of resources $M=\{1,...,m\}$ and a set of $n$ agents.
Let $\mathcal{E}_i \subseteq [0,1]^m$ be the set of allowed endowments for agent $i$.
Let $\mathcal{A} \subseteq [0,1]^{n \times m}$ be the set of allowed allocations of resources among the agents, where $\mathcal{A}_i \subseteq [0,1]^m$ stands for the set of possible allocations to player $i$.

The valuation of player $i$ is a function $v_i:\mathcal{A}_i \cup \mathcal{E}_i\rightarrow \reals_+$. Let $V_i$ be the set of all possible valuations of player $i$, and $V=V_1 \times ... \times V_n$.
We sometimes assume a Bayesian model, where $v_i$ is drawn from $V_i$ according to a distribution $F_i$, independently from the valuations of the other agents. The valuations are private information and the endowments are known to the designer.

A (direct revelation) reallocation mechanism consists of an allocation function $M:V \rightarrow \Delta(\mathcal{A})$ and a payment function $p:V \rightarrow \reals^n$.\footnote{Note that as agents in our model can be sellers and buyers simultaneously, we do not assume that payments are positive; Negative payments mean transfers from the mechanism to the agents.
}
All of our mechanisms are dominant-strategy truthful. That is, reporting the true valuations $v_i$ is a dominant strategy for every agent $i$. Truthful behavior is ex-post (rather than dominant-strategy in expectation) which allows us, e.g., to ignore distributional beliefs of the agents and whether they are risk-neutral or not. We require the following:
\begin{itemize}
\item \textbf{Ex Post Individual Rationality.} For every allocation and payment $A_i,p_i$ eventually allocated to agent $i$ with initial endowment $E_i \in \mathcal{E}_i$ (after the realization of the valuations and the randomness of the mechanism), we have that
    $v_i(A_i)- p_i \geq v_i(E_i)$.

\item \textbf{Ex Post Budget Balance.} For every $\mathbf{v}\in V$ of the preferences we have  $\sum_{i=1}^n p_i(\mathbf{v}) = 0$.
    If, instead, we only have that $\sum_{i=1}^n p_i(\mathbf{v}) \geq 0$, the mechanism is \textit{weakly budget balanced}.\footnote{
    Note that this definition holds for every realization of $v$ (and not only in expectation, which is usually a key for achieving budget balance in Bayesian domains, e.g., in \cite{CGK87}).
    }

\item \textbf{Approximate Efficiency}. We would like to approximate the optimal expected efficiency with non-strategic agents, $OPT=\max_{A \in \mathcal{A}} E_{\mathbf{v}\in V} \big[ \sum_{i=1}^n v_i(A_i)\big]$. A mechanism achieves an $\alpha$ approximation to the optimal welfare if $E[\sum_{i=1}^n v_i( M(\mathbf{v}) )] \geq \frac{OPT}{\alpha}$ (expectation is over the random coins of the mechanism, if any, and over the valuations $\mathbf{v}\in V$).
\end{itemize}

\section{Partnership Dissolving and Bilateral Trade}\label{sec-bilateral}

Our first set of results concerns bilateral trade and partnership dissolving. As we will see these two settings are closely related, in the sense that an approximation algorithm for the bilateral trade problem immediately implies an algorithm with the same approximation guarantee for partnership dissolving. We later extend the mechanisms we develop to more complex multi-parameter settings.

\subsection{Bilateral Trade}

In the bilateral trade problem a seller holds a single indivisible good. The seller's value for the good $v_s$ is drawn from some distribution $D_s$. There is also a buyer, whose value for the good $v_b$ is independently drawn from $D_b$. Let $M_s$ denote the median of the seller's distribution. The \emph{Median Mechanism} works as follows: if both players accept the price $M_s$ (i.e., $v_b \geq M_s$ and $v_s \leq M_s$) then the buyer gets the item and pays $M_s$ to the seller. Otherwise, the seller keeps the item and no payments are made.

\begin{theorem}
The \textit{Median Mechanism} is truthful, individually rational, budget balanced and achieves a $2$-approximation to the optimal social welfare.
\end{theorem}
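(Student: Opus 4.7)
Truthfulness, individual rationality, and strong budget balance are essentially structural. The price $M_s$ is computed in advance from the prior $D_s$ alone and does not depend on either agent's report, so the mechanism is a posted-price offer; each agent's only meaningful action is accept/reject, and truthful reporting is (weakly) dominant because trade benefits the seller iff $v_s \leq M_s$ and the buyer iff $v_b \geq M_s$. Individual rationality holds since no-trade outcomes return endowments, while in a trade the seller receives $M_s \geq v_s$ and the buyer obtains the item at price $M_s \leq v_b$. Strong budget balance is immediate: the only transfer is $M_s$ from buyer to seller.

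The substantive content is the $2$-approximation. I would approach it by producing an upper bound on $\max(v_s, v_b)$ that splits cleanly into a seller-side piece controllable by the median identity and a buyer-side piece exploitable via independence. Concretely, I expect the pointwise inequality
\[
\max(v_s, v_b) \;\leq\; v_s\,\mathbf{1}[v_s > M_s] \;+\; M_s\,\mathbf{1}[v_s \leq M_s] \;+\; \max(0,\, v_b - M_s)
\]
to do the job; it can be verified by a four-way case analysis on the signs of $v_s-M_s$ and $v_b-M_s$. Taking expectations and using $\Pr[v_s \leq M_s] = \tfrac{1}{2}$ gives
\[
OPT \;\leq\; E\bigl[v_s\,\mathbf{1}[v_s>M_s]\bigr] \;+\; \tfrac{M_s}{2} \;+\; E\bigl[\max(0, v_b - M_s)\bigr].
\]

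For the matching lower bound on the mechanism's welfare, I would note that it is at least $E[v_s\,\mathbf{1}[v_s>M_s]]$ (whenever $v_s>M_s$ the seller refuses and keeps the item, contributing $v_s$) plus $E[v_b\,\mathbf{1}[v_s\leq M_s,\,v_b\geq M_s]]$ from the trade event. By independence of $v_s$ and $v_b$ the latter factors as $\tfrac{1}{2}E[v_b\,\mathbf{1}[v_b\geq M_s]] = \tfrac{1}{2}\bigl(M_s\Pr[v_b\geq M_s] + E[\max(0, v_b - M_s)]\bigr)$. Doubling the mechanism's welfare and subtracting the OPT bound, the $\max(0, v_b - M_s)$ terms cancel, leaving
\[
2\cdot \text{MECH} - OPT \;\geq\; E\bigl[v_s\,\mathbf{1}[v_s>M_s]\bigr] \;-\; \tfrac{M_s}{2} \;+\; M_s\Pr[v_b\geq M_s].
\]
Since $E[v_s\,\mathbf{1}[v_s>M_s]] \geq M_s\cdot\Pr[v_s>M_s] = \tfrac{M_s}{2}$ by the defining property of the median, the right-hand side is nonnegative, yielding the claim.

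The main obstacle I anticipate is picking the right decomposition: naive bounds such as $\max(v_s,v_b)\leq v_s+v_b$ are far too lossy, and any bound that discards the $v_s\,\mathbf{1}[v_s>M_s]$ contribution from the mechanism leaves no way to absorb the $M_s/2$ term that unavoidably arises from the median. The split above is precisely what aligns the two available levers---independence of the two priors and the median identity $\Pr[v_s\geq M_s]\geq \tfrac{1}{2}$---so that the constant $2$ falls out cleanly without slack.
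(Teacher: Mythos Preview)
Your argument is correct. The pointwise inequality you state indeed holds in all four quadrants, and the subsequent algebra goes through exactly as you describe (under the same continuity/no-atom-at-the-median convention the paper also implicitly uses when writing $\Pr[v_s\le M_s]=\tfrac12$).

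Your route, however, is organized quite differently from the paper's. The paper partitions instances by the \emph{buyer's} position relative to $M_s$: type~1 where $v_b<M_s$ and type~2 where $v_b\ge M_s$. It then proves $ALG_i\ge OPT_i/2$ separately for each type, introducing auxiliary quantities $S_i,B_i$ (the seller's and buyer's contributions to $OPT_i$) and arguing, e.g., that $B_1\le M_s/2$ while $ALG_1\ge\max\{S_1,\,M_s/2\}$. Your proof instead works globally via a single pointwise upper bound on $\max(v_s,v_b)$ whose three summands line up term-by-term with contributions to the mechanism's welfare; the median identity and independence are invoked once each at the end. What your approach buys is compactness and a transparent audit trail for where the factor~$2$ comes from (the doubling is forced by the single use of $\Pr[v_s\le M_s]=\tfrac12$). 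What the paper's case split buys is a clearer picture of \emph{which} instances are hard---type~1 instances where the buyer is weak versus type~2 instances where the seller may refuse---which is useful context for the subsequent improvement to $\tfrac{55}{28}$.
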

\begin{proof}
The mechanism is obviously truthful, individually rational, and budget balanced. We now analyze its approximation ratio. We start with some notation. An instance $(v_s,v_b)$ is \textit{type 1}, if $v_b<M_s$. All other instances are \textit{type $2$}. Let $S_i$ denote the contribution to the optimal welfare of type $i$ instances in which the seller has a higher value, i.e., for $i \in \{ 1,2\} )$:
\begin{align*}
S_i=E[v_s|v_s\geq v_b, \hbox{ instance is type i}] \cdot \Pr[v_s\geq v_b, \hbox{ instance is type i}]
\end{align*}
Similarly, let $B_i=E[v_b|v_b > v_s, \hbox{ instance is type i}] \cdot \Pr[v_b> v_s, \hbox{ instance is type i}]$.

Define $OPT_i=S_i+B_i$ (for $i=1,2$) and observe that $OPT=OPT_1+OPT_2$. Let $ALG$ be the expected welfare achieved by the median mechanism.  Let $ALG_i$ be the expected contribution of type $i$ instances to this welfare ($ALG=ALG_1+ALG_2$). We will show that $ALG_1 \geq \frac{OPT_1}{2}$ and $ALG_2 \geq \frac{OPT_2}{2}$ and the theorem will follow.

\begin{claim}
$ALG_1 \geq \frac{OPT_1}{2}$.
\end{claim}
\begin{proof}
First, we claim that $B_1\leq \frac{M_S}{2}$. Indeed, for type $1$ instances
$v_b<M_s$. Hence,
\begin{align*}
B_1&=E[v_b|v_b > v_s, \hbox{ instance is type 1}] \cdot \Pr[v_b > v_s, \hbox{ instance is type 1}]< M_s \cdot \Pr[v_s \leq M_s]  = \frac{M_s}{2}
\end{align*}

Thus, $OPT_1=S_1+B_1\leq S_1+\frac{M_S}{2}$.
Now observe that $ALG_1 \geq S_1$ since the
mechanism never sells the item if $v_s\geq v_b$.
Finally, we claim that $ALG_1>M_S/2$, since 
given that the instances are of type 1, with probability $\frac 1 2$
 we have that $v_s>M_s$ ($v_s$ and $v_b$ are independent)
 and the welfare is at least $M_s$ (as $v_s>v_b$). Together we have that $ALG_1>OPT_1/2$.
\end{proof}

\begin{claim}
$ALG_2 \geq \frac{OPT_2}{2}$.
\end{claim}
\begin{proof}
Observe that in the median mechanism, the buyer buys the item in type $2$ instances only when the seller's value is below the price $M_s$. Thus,
\begin{align*}
ALG_2=S_2+E[v_b|\hbox{instance is type 2}, v_s<M_s] \cdot \Pr[\hbox{instance is type 2},v_s<M_s] .
 \end{align*}

Since $v_s$ and $v_b$ are drawn from independent distributions we have that
\begin{align}
ALG_2& =S_2+ E[v_b|\hbox{instance is type 2}] \cdot \Pr[\hbox{instance is type 2},v_s<M_s]\nonumber\\
& =S_2+ E[v_b|\hbox{instance is type 2}] \cdot \Pr[\hbox{instance is type 2}]\cdot \Pr[v_s<M_s]\nonumber\\
&= S_2+ E[v_b|\hbox{instance is type 2}] \cdot \Pr[\hbox{instance is type 2}]\cdot \frac{1}{2} \nonumber\\
&\geq \frac{1}{2} \big( S_2+ E[v_b|\hbox{instance is type 2}] \cdot \Pr[\hbox{instance is type 2}] \big) \label{eq:bilateral-2-approx-1}
\end{align}

Now observe that $OPT_2<S_2+ E[v_b|\hbox{instance is type 2}]\cdot \Pr[\hbox{instance is type 2}]$
since the RHS is the expected welfare from type $2$ instances when the buyer always gets the item and the seller keeps the item whenever $v_s>v_b$ (so we sometimes count the value of \emph{both} bidders). Thus the RHS is clearly at least than $OPT_2$, which is simply $\max(v_b,v_s)$ in type $2$ instances.
Together with (\ref{eq:bilateral-2-approx-1}) we get that $ALG_2 \geq \frac{OPT_2}{2}$.
\end{proof}
This completes the analysis of the approximation ratio of the median mechanism.\qed
\end{proof}

One nice feature of the median mechanism is that it only requires the knowledge of the distribution of one of the agents (the seller). We show that the median mechanism achieves the best approximation ratio among such mechanisms (proof in the appendix):

\begin{proposition}\label{prop-bilateral-distribution-info}
No truthful, individually rational and budget balanced mechanism an obtain an approximation ratio better than $2$ if the mechanism uses only the distribution of the seller or only the distribution of the buyer.
\end{proposition}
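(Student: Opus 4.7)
The plan is to handle the ``only $D_s$'' case; the ``only $D_b$'' case is symmetric with the roles of buyer and seller (and of ``accept'' and ``reject'') swapped. Throughout, I write $P=P(D_s)$ for the mechanism's induced (possibly randomized) output on the seller's distribution.

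\smallskip
\textbf{Step 1: reduce to randomized posted prices.} The first step is the standard characterization: any dominant-strategy truthful, ex post IR, ex post BB mechanism in bilateral trade is equivalent to a randomized posted-price scheme. Since payments must be nonnegative for the buyer and nonpositive for the seller (IR), sum to zero (BB), and be independent of the players' reports conditional on the trade decision (DSIC), the mechanism draws a threshold $p\sim P$, and trade occurs iff $v_s\le p\le v_b$ with the buyer paying $p$ to the seller. Because the mechanism sees only $D_s$, the distribution $P=P(D_s)$ is a function of $D_s$ alone, which is all the freedom the designer has.

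\smallskip
\textbf{Step 2: a two-move adversary.} Fix such a mechanism. The plan is to exhibit, for a well-chosen $D_s$, two candidate buyer distributions, both point masses, such that neither can be defended against simultaneously:
\begin{itemize}
\item $D_b^{\uparrow}$: a point mass at a value $M$ chosen much larger than the support of $P(D_s)$. Here OPT $\to M$, while $E[\textrm{ALG}]\to M\cdot\Pr_{v_s\sim D_s,\,p\sim P}[v_s\le p]+O(1)$, so the ratio tends to $1/\Pr_{v_s,p}[v_s\le p]$. To beat $2$, the mechanism must make $\Pr[v_s\le p]>1/2$.
\item $D_b^{\downarrow}$: a point mass at a value $M$ placed just below the ``typical'' price and just below a high-probability value in the seller's support. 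Here no trade occurs for most realizations of $p$, so $E[\textrm{ALG}]\to E[v_s]$, while OPT $\to E[\max(v_s,M)]$, which for the right $D_s$ can be made close to $2\,E[v_s]$.
\end{itemize}

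\smallskip
\textbf{Step 3: the tradeoff.} The heart of the argument is to choose $D_s$ (concretely, a two-point distribution on $\{L,H\}$ with a large gap $H\gg L$, or a continuous distribution with a median near $H$) so that the two adversary moves are ``dual'': increasing $\Pr_{v_s,p}[v_s\le p]$ (which helps against $D_b^{\uparrow}$) forces the mechanism's prices to be below the upper mass of $D_s$, which in turn makes the no-trade events in $D_b^{\downarrow}$ look bad, and vice versa. A short calculation in the chosen $D_s$ turns this informal tradeoff into an inequality that guarantees $\max\{r_{\uparrow},r_{\downarrow}\}\ge 2-o(1)$ over the two ratios induced by the two adversary moves.

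\smallskip
\textbf{Step 4: handling randomized $P$.} The main obstacle is that the lower bound is easy for a deterministic posted price (where one directly picks the adversary's $D_b$ as a function of $p$) but subtler when $P$ is a mixture: a mixture over a ``low'' and a ``high'' price could, in principle, fare well against both adversary moves. The key observation is that for any point-mass $D_b$, the welfare ratio $OPT/E[\textrm{ALG}]$ is determined by the \emph{average} of ALG over the randomness of $p$, so the tradeoff in Step 3 applies to $E_p[\Pr[v_s\le p]]$ and to $\Pr_p[p\ge H]$ as scalar quantities. The same two adversary moves then force the same tradeoff, completing the proof for randomized mechanisms. The symmetric statement when only $D_b$ is known follows by swapping the buyer/seller roles in the same argument.
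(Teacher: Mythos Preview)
Your Steps 1--3 are essentially the paper's argument: fix a two-point seller distribution (half the mass near $0$, half near $1$) and defeat any single posted price $r$ with a point-mass buyer at $\infty$ (if $r\le 1$) or at $1-\epsilon$ (if $r>1$). The paper stops there --- it only treats a deterministic price. Your Step 4, which tries to extend this to randomized $P$, is where the plan breaks. Take $D_s$ uniform on $\{0,1\}$ and let $P$ post $p=\tfrac12$ with probability $\tfrac23$ and $p=\tfrac32$ with probability $\tfrac13$. Against $D_b^{\uparrow}$ one has $\Pr[v_s\le p]=\tfrac12\cdot 1+\tfrac12\cdot\tfrac13=\tfrac23$, so the ratio tends to $\tfrac32$. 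Against $D_b^{\downarrow}$ (buyer at $1-\epsilon$, ``just below the high seller value'') one gets $E[\mathrm{ALG}]\approx \tfrac56$ and $\mathrm{OPT}\approx 1$, a ratio of $\tfrac65$. In fact the worst point-mass buyer against this $P$ yields ratio exactly $\tfrac32$ (attained as $c\to\tfrac32^{-}$ and as $c\to\infty$). So your two moves are not in tight enough tension: a mixture that places some mass in $(L,H)$ and some above $H$ beats both simultaneously, and the ``averaging'' of Step 4 does nothing to restore the bound. To push the lower bound to $2$ for randomized $P$ you would need either an adversary move tailored to each price in the support of $P$, or a fundamentally different $D_s$; the plan supplies neither.

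A second gap: the two halves of the proposition are \emph{not} symmetric, because the no-trade welfare is $v_s$, not $v_b$. The paper accordingly uses a different construction for the ``only $D_b$'' half (buyer mass $0.99$ near $1$ and $0.01$ near $100$, with a three-way case split on the price), so your one-line symmetry appeal does not go through either.
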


%The above mechanism only makes use of the distribution of the seller.
As we show in the next theorem (proved in the appendix), carefully choosing a trade price that depends on \emph{both} distributions enables us to get a better approximation ratio:
\begin{theorem}
There exists a truthful, individually rational, and budget balanced mechanism that achieves a $\frac{55}{28}$-approximation to the optimal social welfare.
\end{theorem}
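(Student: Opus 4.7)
The plan is to construct a \emph{fixed-price} mechanism in which the trade price $p$ is chosen as a deterministic function of the two medians $M_s$ and $M_b$. Since $p$ does not depend on the reports, the mechanism is automatically truthful, individually rational and strongly budget balanced, and the entire argument reduces to choosing $p$ and bounding the worst-case ratio of $OPT$ to the expected welfare. I would split on the sign of $M_b-M_s$ and choose $p$ differently in the two regimes.

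When $M_b\ge M_s$ we are in the regime covered by McAfee's result \cite{mcA08} recalled in the introduction: any price $p\in[M_s,M_b]$ yields a $2$-approximation to the expected gain from trade $E[\max(0,v_b-v_s)]$. Because a fixed-price mechanism always achieves welfare at least $v_s$, writing $a=E[v_s]$ and $b=E[\max(0,v_b-v_s)]$, McAfee already gives welfare ratio at most $(a+b)/(a+b/2)$, which is strictly below $2$ unless $a\ll b$. To convert this into a uniform improvement I would pick $p$ as a convex combination $\alpha M_s+(1-\alpha)M_b$ and refine the type-$1$/type-$2$ case analysis of the previous theorem, sub-dividing the event space according to the positions of $v_s$ and $v_b$ relative to each of $M_s$, $p$ and $M_b$. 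The independence of $v_s$ and $v_b$ together with the median identities $\Pr[v_s\le M_s]=\Pr[v_b\le M_b]=\tfrac12$ reduce each sub-region's contribution to $ALG$ and $OPT$ to a closed form in the corresponding conditional expectations.

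When $M_b<M_s$ McAfee does not apply and I would fall back to the Median Mechanism at price $M_s$. The previous theorem's proof has slack that can be tightened here: it uses only $\Pr[v_s\le M_s]=\tfrac12$, whereas $M_b<M_s$ additionally gives $\Pr[v_b\le M_s]\ge\tfrac12$, which can be plugged back into the derivations of both $ALG_1\ge OPT_1/2$ and $ALG_2\ge OPT_2/2$ to push each ratio strictly below $2$. A similar sub-region decomposition as in Case~1, now using $M_b$ as the additional threshold, would give the refined bound.

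The main obstacle is translating these qualitative improvements into the specific constant $\tfrac{55}{28}=2-\tfrac{1}{28}$. This requires simultaneously optimizing $\alpha$ in Case~1 and balancing the sub-case contributions in both cases so that the worst-case ratio is equalized and minimized. No single step is deep; the content of the proof lies in the sub-region decomposition and in the arithmetic optimization over the parameters, which is exactly where the particular value $55/28$ emerges.
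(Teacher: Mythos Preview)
Your plan diverges from the paper's construction in a way that leaves a genuine gap. You commit to a trade price that is a function of the two medians $(M_s,M_b)$ alone, and in each regime you propose a \emph{single} price: a convex combination $\alpha M_s+(1-\alpha)M_b\in[M_s,M_b]$ when $M_b\ge M_s$, and $p=M_s$ when $M_b<M_s$. The paper's mechanism uses strictly more distributional information and, in each regime, chooses between two candidate mechanisms based on that extra information.

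Your Case~2 claim fails concretely. Let the seller's value be near $0$ with probability $\tfrac12-\epsilon$ and uniform on $[K,K+1]$ with probability $\tfrac12+\epsilon$, so that $M_s\approx K$, and let the buyer's value be $K-1$ deterministically, so $M_b=K-1<M_s$. At price $p=M_s$ the buyer never purchases, hence $ALG=E[v_s]\approx K/2$ while $OPT\approx K$, and the ratio tends to $2$; note that here $\Pr[v_b\le M_s]=1$, so the additional inequality you invoke provides no leverage. The paper instead evaluates the expected welfare of \emph{both} candidate prices $M_s$ and $M_b$ (which requires the full distributions, not just the medians) and selects the better one, balancing the two guarantees $\tfrac{O_1}{2}+\tfrac{7O_2}{13}$ and $\tfrac{2O_1}{3}$ across a decomposition $OPT=O_1+O_2$ obtained via a quadruple coupling argument. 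In Case~1 your price stays inside $[M_s,M_b]$, which is exactly McAfee's range; as you yourself observe, this yields at best $\frac{a+b}{a+b/2}\to 2$ when $a=E[v_s]$ is small relative to the gain from trade, and refining the type-$1$/type-$2$ split does not change that limit. The paper's remedy is a separate \emph{$t$-threshold} mechanism that posts the price $3\,E[v_s]$ (analysed via Markov's inequality) and switches to it precisely when $E[v_s]<OPT/13$; both the threshold price and the switching rule depend on $E[v_s]$, not on the medians, and this two-mechanism balance is where the constant $\tfrac{13}{7}$ in Case~1, and ultimately $\tfrac{55}{28}$, arise.
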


Myerson and Satterthwaite \cite{MS83} already proved that no mechanism can be fully efficient. We further provide a quantification for that statement using the observation that every truthful deterministic\footnote{
Choosing a price from some distribution in this Bayesian environment cannot help, as there always exists a deterministic mechanism with at least the same welfare (i.e., the best price in the support).
} mechanism for bilateral trade simply sets a fixed trade price $r$, since the trade price cannot depend neither on the seller's value nor on the buyer's. Trade occurs if and only if $v_s\leq r$ and $v_b\geq r$. The simple proof is in the appendix.

\begin{proposition}\label{prop-bilateral-lowerbound}
No truthful, individually rational, budget balanced mechanism can achieve an approximation ratio better than $1.1231$ to the social welfare.
\end{proposition}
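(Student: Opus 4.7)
The plan is to exploit the characterization the authors provide: every deterministic truthful, IR, budget-balanced mechanism for bilateral trade simply posts a fixed trade price $r$, and trade occurs exactly when $v_s \le r \le v_b$. The authors also note that randomization does not help: a random choice of price yields expected welfare equal to a convex combination of fixed-price welfares, so some fixed price in the support does at least as well. So the task reduces to exhibiting a pair of distributions $(D_s, D_b)$ such that for \emph{every} trade price $r \in \reals_+$, the ratio $OPT / ALG(r)$ is at least $1.1231$.

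The second step is to write the welfare produced by a fixed price $r$ in the form
\[
ALG(r) \;=\; E[v_s] + E\bigl[(v_b - v_s)\cdot \mathbf{1}_{\{v_s \le r \le v_b\}}\bigr],
\]
using that trade turns a value $v_s$ outcome into a value $v_b$ outcome, while
\[
OPT \;=\; E[\max(v_s,v_b)] \;=\; E[v_s] + E[\max(0, v_b - v_s)].
\]
Thus the inefficiency is controlled by the gain from trade that a posted price $r$ fails to capture, namely the mass where $v_b > v_s$ but either $v_s > r$ or $v_b < r$.

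The third step is to choose a concrete adversarial family of distributions supported on few points, so that only finitely many trade prices need to be considered (any price between two consecutive support values gives the same outcome). A natural choice is to take $v_s$ and $v_b$ each supported on two carefully chosen points with interleaved support, say $v_s \in \{a_1, a_2\}$ and $v_b \in \{b_1, b_2\}$ with $a_1 < b_1 < a_2 < b_2$, together with mass parameters $p,q$. Then a fixed price $r$ falls into one of a few regimes (essentially $r < b_1$, $b_1 \le r \le a_2$, $r > a_2$), and in each regime $ALG(r)$ is a simple linear function of $p,q$ and the support values. One then optimizes the parameters $(a_i, b_i, p, q)$ to maximize
\[
\min_{r}\; \frac{OPT}{ALG(r)},
\]
and reads off the value $1.1231$ from the resulting saddle point.

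The main obstacle is not conceptual but quantitative: finding the exact parameter values that simultaneously make the three relevant prices equally bad (an equalizer construction) and yield the stated constant. This is the standard minimax trick for proving such impossibility bounds; once the distributions are fixed, verifying the bound is a routine case analysis over the constant number of candidate prices, with the independence of $v_s$ and $v_b$ making each expectation a simple product.
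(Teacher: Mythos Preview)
Your reduction to fixed-price mechanisms and the welfare decomposition
\[
ALG(r) = E[v_s] + E\bigl[(v_b - v_s)\,\mathbf{1}_{\{v_s \le r \le v_b\}}\bigr]
\]
are exactly the right starting point, and they match the paper. The divergence is in the choice of adversarial distributions. The paper does not use two-point distributions; it takes both values to be exponential, $F_s(v)=1-e^{-sv}$ and $F_b(v)=1-e^{-bv}$, computes $OPT = E[\max(v_s,v_b)] = \tfrac{1}{s}+\tfrac{1}{b}-\tfrac{1}{s+b}$, writes $ALG(p)$ in closed form using the memorylessness identity $E[x\mid x>p]=p+E[x]$, and differentiates in $p$. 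Plugging in $s=1$, $b=\tfrac{1}{2}$ gives an optimal price $p\approx 1.603$, best achievable welfare $\approx 2.0775$, and $OPT=\tfrac{7}{3}$, whence the ratio $\tfrac{7/3}{2.0775}\approx 1.1231$. So the constant in the statement is precisely the number that falls out of this specific exponential instance.

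This is where your proposal has a genuine gap. You outline a two-point construction with interleaved supports and say that after optimizing the five free parameters one ``reads off the value $1.1231$ from the resulting saddle point.'' But you never carry out that optimization, and there is no reason the equalizer for a $2\times 2$ discrete instance should reproduce the exact constant coming from an exponential pair. Your scheme will certainly produce \emph{some} lower bound strictly greater than $1$, and the case split over the finitely many effective prices is correct, but matching the stated $1.1231$ requires either (a) actually solving your minimax problem and showing the optimum is at least $1.1231$, or (b) abandoning the discrete construction and using the exponential instance above, where the calculus is short and the number drops out directly. As written, the last sentence of your plan is an assertion, not an argument.
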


\subsection{Dissolving Partnerships}

In the \emph{partnership dissolving} problem, there are $n$ agents, each agent $i$ owns a share $r_i$ of an asset, and $\sum_{i=1}^n r_i =1$. Each agent $i$ has a value $v_i$ for holding the entire asset, or a value $c\cdot v_i$ for holding a fraction $c\geq 0$ of the asset. Let $r_{max}=\max\{r_1,...,r_n\}$ be the largest share held by an agent.

We first construct a mechanism that does not use any distributional assumptions, yet it provides a good approximation ratio given that the initial endowment of every player is not ``too big''. As we will see, achieving a better approximation ratio requires some distributional assumptions.

We then develop mechanisms with improved guarantees if the valuations are drawn from some known distirbutions. We show that any approximation algorithm for bilateral trade can be used for constructing a mechanism for partnership dissolving with the same approximation ratio.

%we will show a reduction of the partnership dissolving problem to the bilateral trade problem: an $\alpha$-approximation for bilateral trade implies an $\alpha$-approximation for partnership dissolving.

\subsubsection{The Pivot Mechanism: Prior-Free Partnership Dissolving}

We first present a simple prior-free approximation mechanism for the partnership dissolving problem. The mechanism announces a price equal to the second highest value, and all bidders may sell their share at that price to the highest-value bidder. As long as the agent with the second-highest value does not own a large share of the asset, the mechanism achieves a good approximation.

Without loss of generality, assume that $v_1 \geq v_2 \geq ... \geq v_n$. The \emph{Pivot Mechanism for partnership dissolving} is defined as follows:
\begin{itemize}
\item Each bidder $i\geq 3$ give their shares $r_3,...,r_n$ to bidder $1$. Bidder 2 keeps his share.
\item Bidder $1$ pays $r_i\cdot v_2$ to every bidder $i\in \{3,...,n\}$.
\end{itemize}

\begin{proposition}
The Pivot Mechanism for partnership dissolving is truthful, individually rational, and budget balanced. It recovers a fraction of $1-r_{max}$ of the optimal social welfare in every instance.
\end{proposition}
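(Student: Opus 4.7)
My plan is to check the four properties separately; the bulk of the work is a short case analysis for truthfulness, while the others follow by direct accounting. Budget balance is immediate: the total payment $v_2 \sum_{i\geq 3} r_i$ made by bidder $1$ equals the sum of amounts received by bidders $3,\ldots,n$, so the transfers cancel. For individual rationality I would verify each agent in turn. Bidder $2$ keeps his share $r_2$ and makes no payment, so his utility equals the endowment value. Every bidder $i \geq 3$ trades his share worth $r_i v_i$ for cash $r_i v_2 \geq r_i v_i$, since $v_2 \geq v_i$. Bidder $1$ acquires additional mass $\sum_{i\geq 3} r_i$ at unit price $v_2 \leq v_1$, yielding nonnegative surplus above the value of his initial share.

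For truthfulness, I would view the mechanism as a second-price-style trade in which the top-ranked reporter absorbs every share held by bidders ranked third or lower, at a unit price equal to the second-highest report. I would then consider deviations by rank. The top-value bidder cannot gain by underreporting: if he remains first the outcome is identical, while if he drops below first place he either forgoes the nonnegative per-unit surplus $v_1 - v_2$ or has his own share sold elsewhere without adequate compensation. Bidder $2$ cannot benefit either: raising his bid turns him into a buyer paying $v_1$ per unit that he values at $v_2 \leq v_1$, and lowering it makes him a seller at a new second-highest price at most $v_2$, so at most equal to his per-unit value. For a bidder $i \geq 3$, any deviation that keeps him in position three or lower leaves the second-highest report---hence the outcome---unchanged; jumping into second place keeps his share at the lower value $v_i \leq v_2$ rather than selling it at $v_2$, and jumping into first place forces him to buy shares at unit price $v_2$ that he values only at $v_i$.

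For the approximation bound, the first-best welfare equals $v_1$, achieved by giving the entire asset to the top-value bidder. The mechanism realizes welfare $v_1(1 - r_2) + v_2 r_2 \geq v_1(1 - r_2)$, and since $r_2 \leq r_{max}$ by definition of the maximum share, this is at least $(1 - r_{max}) v_1$, proving the claim. I do not foresee a serious obstacle here; the only mild care point is tracking how the second-highest report shifts when the deviator changes rank in the truthfulness analysis.
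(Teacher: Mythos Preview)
Your proposal is correct and follows the same approach as the paper, which simply declares truthfulness, individual rationality, and budget balance to be ``straightforward'' and then gives exactly your approximation calculation $(1-r_2)v_1 + r_2 v_2 \geq (1-r_2)v_1 \geq (1-r_{max})\cdot OPT$. One small slip: when a bidder $i\geq 3$ overreports into first place, the new second-highest report is $v_1$ rather than $v_2$; since $v_1 \geq v_i$ as well, your conclusion is unaffected.
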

\begin{proof}
It is straightforward to see that the mechanism is truthful, individually rational and budget balanced. We now analyze the approximation ratio of the mechanism. Denote the optimal welfare by $OPT$ and observe that $OPT=v_1$. The welfare of the mechanism is:
\begin{align*}
(1-r_2)\cdot v_1+r_2\cdot v_2
\geq   (1-r_2)\cdot v_1
\geq  (1-r_{max})OPT
\end{align*}
\end{proof}

We note that, as stated, the mechanism only partially dissolves partnerships, since sometimes both players $1$ and $2$ end up with non-zero shares $x_1$ and $x_2$. Mechanisms with partial dissolving can be turned to full dissolving as follows: denote the payment of agent $i$ by $p_i$. Run the mechanism, and give all shares to player $i$ ($i=1,2$) with probability $t_i=\frac {x_i} {x_1+x_2}$ for a payment of $p_i/t_i$. The approximation ratio is the same, the mechanism is still individually rational, but now the mechanism is only \emph{truthful in expectation} (notice that the expected payment of each bidder $i$ is still $p_i$).

We now show that without distributional knowledge, the pivot mechanism for partnership dissolving is essentially the best we can get.

\begin{proposition}\label{prop-partnership-lower}
\label{prop:partnership-hardness-rmax}
No deterministic, truthful, individually rational, and budget balanced mechanism recovers more than $1-r_{max}$ of the optimal social welfare in every instance, for every $r_{max}\geq \frac 1 2$.
\end{proposition}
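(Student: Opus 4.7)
The plan is to reduce to the two-agent case (dummy agents with zero endowment and zero valuation do not change the mechanism's behaviour), and then analyse any deterministic truthful IR BB mechanism using Myerson's single-parameter characterisation together with pointwise budget balance. Write $r_1=r_{max}$, $r_2=1-r_{max}$, and denote agent 1's fractional allocation and payment by $x_1(v_1,v_2)\in[0,1]$ and $p_1(v_1,v_2)$, so that by BB $p_2=-p_1$.

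The first step extracts two simple monotonicity-plus-IR facts. Applying Myerson's characterisation agent by agent gives that $x_1$ is non-decreasing in $v_1$ and non-increasing in $v_2$, and payments are determined by $x_1$ up to constants on the axes. Combining agent 1's IR at $(0,v)$ (which, together with BB, yields $p_2(0,v)\ge 0$) with agent 2's IR at the same instance (which gives $p_2(0,v)\le v\,(r_{max}-x_1(0,v))$) forces $x_1(0,v)\le r_{max}$ for every $v>0$. A symmetric argument at any point $(v_1,v_2)$ with $v_1>v_2$ shows that simultaneous satisfaction of both IR constraints requires $v_2(x_1-r_{max})\le p_1\le v_1(x_1-r_{max})$, which forces $x_1(v_1,v_2)\ge r_{max}$ throughout that region.

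The second and central step is to exhibit some $v>0$ at which $x_1(0,v)=r_{max}$: once we have this, the instance $(0,v)$ has welfare exactly $v(1-r_{max})$ while $OPT=v$, giving the claimed bound. I would view the sub-mechanism obtained by freezing $v_1=0$ as a single-parameter truthful IR mechanism in which agent 2 optionally buys a portion of agent 1's $r_{max}$ share in exchange for a payment; by Myerson's lemma this is a menu of $(x_2,p_2)$ pairs indexed by $v_2$. Coupling this menu with global budget balance and with agent 1's IR at off-axis points $(v_1,0)$ for large $v_1$, one shows that the menu must contain the trivial ``no-trade'' option on some interval of small positive values of $v_2$; any $v$ in this interval works.

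The main obstacle is this last step, because nothing a priori prevents agent 2's menu at $v_1=0$ from making every positive valuation buy some positive amount (as happens for a smooth ``linear'' offering on a bounded valuation domain). The key point is that on an unbounded valuation domain, trading a positive amount at every $(0,v)$ with $v>0$ would force, via Myerson's formula for agent 1 and BB, the axis constant $p_2(v_1,0)$ to grow linearly in $v_1$ as $v_1\to\infty$, which ultimately conflicts with agent 2's IR at $(v_1,0)$. Ruling this out isolates the no-trade sub-interval and supplies the instance that caps the approximation ratio at $1-r_{max}$.
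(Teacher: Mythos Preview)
Your approach is considerably more elaborate than the paper's. The paper reduces to two agents with shares $(1-r_{max},\,r_{max})$, then asserts in one line that dominant-strategy truthfulness forces a fixed trade price $p$ (independent of both reports), and exhibits the instance $v_1=p-\epsilon$, $v_2=0$: no trade occurs, so welfare is $(1-r_{max})v_1$ against $OPT=v_1$. You instead try to derive the existence of a no-trade instance from scratch via Myerson's lemma and pointwise IR/BB.

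Your steps (1)--(3) are correct and cleanly argued. The gap is in step (4). The specific claim that ``trading a positive amount at every $(0,v)$ forces $p_2(v_1,0)$ to grow linearly in $v_1$'' does not follow. From Myerson for agent~2 together with BB at $(v_1,0)$ one gets
\[
p_2(v_1,0)\;=\;\int_0^{v_1} x_1(t,0)\,dt \;-\; v_1\,x_1(v_1,0),
\]
which is determined entirely by the allocation on the axis $v_2=0$ and, by monotonicity of $x_1(\cdot,0)$, is always $\le 0$. So it can never grow positive and conflict with agent~2's IR at $(v_1,0)$. The behaviour on the axis $v_1=0$ simply does not enter this formula. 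Consistency of the two Myerson payment formulas with global BB does impose a joint constraint linking the two axes, but extracting from it the conclusion ``there is an interval of small $v_2$ with no trade'' is substantially harder than your sketch indicates, and the unbounded-domain argument you gesture at does not close the gap as written.

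The paper sidesteps all of this by asserting the fixed-price structure directly (for an indivisible good this is the Hagerty--Rogerson characterisation; for a divisible good with linear valuations the analogue is being invoked implicitly). One may view that assertion as itself a small gap in the paper's write-up, but it is the intended route and delivers the bad instance in one line. Your more careful programme would need a genuinely different argument for step (4) to be complete.
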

\begin{proof}
Consider an instance with $n$ bidders. Let the valuations of bidders $3,\ldots, n$ be identically zero and $r_3=\ldots=r_n=0$. Let $r_1=1-r_{max}$ and $r_2=r_{max}$. Observe that since the valuations of bidder $3,\ldots, n$ are $0$ and they have no shares, the only bidders $1$ and $2$ may trade. By dominant-strategy truthfulness, the trade price cannot depend on $v_1$ and $v_2$, hence it is some fixed price $p$ for every $v_1$ and $v_2$. Now assume that $v_1=p-\epsilon$ and $v_2=0$. The optimal solution has value $v_1$ (give bidder $1$ all shares). However, no trade occurs so the value of the outcome is only $(1-r_{max})\cdot v_1$.
\end{proof}

\subsubsection{From Bilateral Trade to Partnership Dissolving }
\label{subsec:partership-reduction}

%\begin{definition}
%A posted price mechanism publishes some price $p$ before observing the bids of the players, and sells the item if and only if both the seller and one of the buyers accept the price $p$. (Note that the mechanism can take into account statistical distribution on the values of the players for determining $p$.)
%\end{definition}

Next we show a reduction from partnership dissolving to bilateral trade. Given distributional assumptions on the values of the bidders the reduction guarantees constant approximation ratios for partnership dissolving irrespectively of the size of the initial shares. The idea is that each agent sells his share to the other agents via the bilateral trade mechanism. However, the mechanism needs to carefully adjust the prices so that truthfulness is maintained.

\begin{lemma}
\label{lem:reduction-to-n-buyers}
Let $M$ be some truthful, individually rational, and budget balanced mechanism for bilateral trade that achieves an $\alpha$-approximation to the welfare. There is a truthful, individually rational and budget balanced mechanism for partnership dissolving which also achieves an $\alpha$-approximation to the welfare.
\end{lemma}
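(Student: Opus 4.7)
My plan is to reduce an instance of partnership dissolving with $n$ agents to $n$ parallel invocations of the bilateral-trade mechanism $M$, one per agent. In the $i$-th invocation, agent $i$ plays the role of the seller of his entire share $r_i$, with seller value $r_i v_i$ (distribution induced from $F_i$), and a virtual ``aggregate buyer'' plays the role of the buyer, with value $r_i\cdot\max_{j\ne i}v_j$ (distribution induced from the joint of $(F_j)_{j\ne i}$). If $M$ prescribes that this trade be executed, the share $r_i$ is physically transferred from $i$ to some agent in $\arg\max_{j\ne i}v_j$, and that agent pays the bilateral-trade price (scaled by $r_i$) to agent $i$. Because each invocation only modifies share $r_i$, the $n$ outcomes compose into a well-defined final allocation.

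The approximation analysis is the easy part. By linearity of the valuations and disjointness of the shares modified by the different invocations, the expected final welfare equals $\sum_i E[\text{welfare of share }r_i\text{ after invocation }i]$. For fixed values, the $i$-th invocation is a valid instance of bilateral trade with seller value $r_i v_i$ and buyer value $r_i\max_{j\ne i}v_j$, so the $\alpha$-guarantee of $M$ yields an expected welfare of at least $\tfrac{r_i}{\alpha}\,E[\max(v_i,\max_{j\ne i}v_j)]=\tfrac{r_i}{\alpha}\,E[v_{\max}]$. Summing over $i$ and using $\sum_i r_i=1$ gives $E[\text{welfare}]\ge E[v_{\max}]/\alpha=\mathrm{OPT}/\alpha$.

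The step I expect to be the main obstacle is dominant-strategy truthfulness. The difficulty is that each agent's report $v_i$ plays two roles simultaneously: as the seller in his own invocation of $M$, and as a contributor to the virtual buyer's value in every other invocation. A naive implementation that hands the share to the top reporter among $\{j\ne i\}$ at $M$'s trade price is manipulable, since an agent whose true value lies above $M$'s price but below the true maximum of others can inflate his report to win the share at a bargain. I plan to close this loophole by replacing the buyer side with a second-price rule: the winner of share $r_i$ is still the top reporter in $\{j\ne i\}$, but he is charged (scaled by $r_i$) the larger of $M$'s price and the second-highest report in $\{j\ne i\}$. The surplus above $M$'s price must then be rebated in a Mirrlees-style, report-independent way so that no agent's own rebate depends on his own report, without upsetting the seller's threshold from $M$. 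Once that coordination is set up, ex-post individual rationality and strong budget balance then follow from the corresponding per-invocation guarantees of $M$, and the approximation bound above carries over unchanged.
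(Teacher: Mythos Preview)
Your reduction---sell each share $r_i$ via a bilateral trade with agent $i$ as seller and a virtual buyer whose value is $\max_{j\ne i}v_j$---is exactly the paper's approach, and your buyer-side fix (the winner among $\{j\ne i\}$ pays the larger of $M$'s posted price $p$ and the second-highest report $m_2$ in $\{j\ne i\}$) is also what the paper does. The approximation argument via linearity and $\sum_i r_i=1$ is correct.

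The gap is the rebate step. You want to keep the seller's threshold at $p$ and redistribute the excess $\max\{p,m_2\}-p$ ``in a Mirrlees-style, report-independent way,'' but no such scheme is specified, and the natural candidates fail. If the surplus goes back to the seller while the trade condition remains $v_i\le p$, the seller can profitably underreport whenever $p<v_i\le m_2$: he strictly prefers to sell at the price $\max\{p,m_2\}\ge v_i$ he would then receive, but truth-telling blocks the sale. If instead the surplus goes to some other agent $k$, then $k$'s rebate in invocation $i$ depends on $m_2$ and on whether trade occurs, both of which can depend on $v_k$; since $v_k$ also enters every other invocation, engineering a report-independent redistribution that exactly balances the budget across all $n$ invocations is the Cavallo/Moulin problem, which in general does not yield strong budget balance.

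The paper's resolution is simpler and eliminates rebates entirely: pay the seller the full $p^{*}=\max\{p,m_2\}$ and change the trade condition to $v_i\le p^{*}$ and $v_j\ge p^{*}$. Since $p^{*}$ depends on neither $v_i$ nor $v_j$, both sides face a posted price, so dominant-strategy truthfulness and ex-post IR are immediate, and the single transfer from buyer to seller gives strong budget balance with nothing left over. The $\alpha$-approximation survives because whenever $M$ would trade ($v_i\le p$ and $v_j\ge p$) the modified mechanism also trades: $v_i\le p\le p^{*}$, and $v_j\ge m_2$ automatically, hence $v_j\ge p^{*}$. In short, the fix is not to protect the seller's threshold from $M$ but to \emph{raise} it to $p^{*}$.
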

\begin{proof}
In the proof we use the already-mentioned fact that any truthful mechanism for bilateral trade simply sets a trade price $p$ that does not depend on the values of the bidders. We develop our mechanism for partnership dissolving in two stages.

\vspace{0.1in} \noindent \textbf{First Stage: A Mechanism where only Bidder $i$ may sell.} We will ``run'' $M$ with bidder $i$ and a hypothetical buyer whose value is distributed according to the distribution of $\max_{k\neq i} v_k$. Let $p$ be the price that $M$ posts. Let $M'$ be the following mechanism:
\begin{itemize}
\item Let $j \in \arg\max_{k\neq i} \{v_k\}$ and let $m_2$ be the second highest value of $v_1,\ldots, v_{i-1},v_{i+1},\ldots, v_n$, that is,
$
    m_2=\max_{k \in N\setminus \{j,i\}} v_{k}.
$
\item Let $p^*=\max\{ p,m_2 \}$. If $v_i\leq p^*$ and $v_j \geq p^*$ then bidder $j$ pays to bidder $i$ the amount of $p^*$ and receives the item. Otherwise bidder $i$ keeps his item and does not get paid.
\end{itemize}

To see that the mechanism is truthful, observe that if a sale is made neither the winning buyer nor bidder $i$ cannot affect the price by changing their bid. A losing buyer $k$ can only turn into a winner by overbidding the winning buyer and paying $p^* \geq v_j\geq v_k$, and therefore cannot gain a positive payoff. If the item is not sold, it is either because the seller's value exceeds $p^*$ (and winning by underbidding induces a payment below $v_i$) or all of the buyers' values are below $p^*$ (and again, overbidding results in a payment higher than $v_j$).

Now for the approximation ratio. Notice that whenever there is a trade in $M$ there is a trade in $M'$ (but the opposite is not true; for example, a trade takes place when $p<v_i\leq m_2$).
$M$ achieves an $\alpha$-approximation to the optimal solution (that may only allocate bidder $i$'s share) which equals $\max\{v_i,\max\{v_{-i}\}\}=\max_{k}\{v_k\}$. $M'$ achieves at least the same expected welfare as $M$, thus it is an $\alpha$-approximation to the optimal welfare (that may only allocate bidder $i$'s share) as well.

The mechanism is individually rational since we always have that $v_i \leq  p^* \leq v_j$. In addition, payments are transferred from one player to another, hence the mechanism is budget balanced.

\vspace{0.1in} \noindent \textbf{Second Stage: The Final Mechanism.} At an arbitrary order, use $M'$ to sell to the other bidders the endowment $r_i$ of each bidder $i$ as a single indivisible item.

We now analyze the approximation ratio. Let $v_{max}=\max_kv_k$ be the highest value. By selling the endowment of bidder $i$ the expected social welfare is at least $\frac{r_i v_{max}}{\alpha}$. Since the valuations of the bidders are linear, after selling all endowments the expected social welfare of at least $\frac{v_{max}}{\alpha}$.

The truthfulness of the mechanism also follows from the linearity of the valuations of the bidders: at every stage they will maximize their payoff from the item independently of the other sales. Therefore, truthfulness follows from the truthfulness of $M'$. Similarly, the mechanism is individually rational and budget balanced.
\end{proof}

As our best approximation for bilateral-trade is $\frac{55}{28}$, the reduction guarantees the same approximation ratio for partnership-dissolving.

\begin{corollary}
There is a truthful, individually rational, and budget balanced mechanism the Partnership Dissolving problem which is a $\frac{55}{28}\approx 1.964$ approximation to the optimal social welfare.
\end{corollary}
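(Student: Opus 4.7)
The plan is to derive this corollary as a direct application of the reduction established in Lemma \ref{lem:reduction-to-n-buyers}, instantiated with the best available bilateral-trade mechanism. Concretely, the previous theorem in this section (the one proved in the appendix) exhibits a truthful, individually-rational, budget-balanced mechanism $M^\star$ for bilateral trade whose expected welfare is at least $\tfrac{28}{55}$ of the welfare of the first-best. I would simply plug $\alpha = \tfrac{55}{28}$ and $M = M^\star$ into the lemma.

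The reduction lemma guarantees that whenever one has an $\alpha$-approximate bilateral-trade mechanism with all three desirable properties, one obtains an $\alpha$-approximate partnership-dissolving mechanism with the same properties. The output mechanism is the two-stage construction described in the lemma: for each agent $i$ (in arbitrary order), run the adapted single-seller mechanism $M'$ that uses the trade price $p^\star = \max\{p,\, m_2\}$, where $p$ is the fixed price posted by $M^\star$ against a hypothetical buyer distributed as $\max_{k\neq i} v_k$, and $m_2$ is the second-highest value among the other players. Each bidder's endowment $r_i$ is treated as an indivisible good and traded via $M'$. Since the valuations are linear, sales for different $i$ do not interact strategically, so truthfulness, individual rationality and (strong) budget balance all lift coordinate-wise from $M'$, exactly as argued in the proof of the lemma.

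The only quantitative step to record is that plugging $\alpha = \tfrac{55}{28}$ into the welfare guarantee of the lemma yields expected welfare at least $\tfrac{28}{55}\cdot v_{\max}$, where $v_{\max} = \max_k v_k$ is the first-best welfare for partnership dissolving (the entire asset going to the highest-value agent). Since $\tfrac{55}{28} \approx 1.964$, this matches the stated bound.

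There is essentially no obstacle: the corollary is a one-line composition of two earlier results. The only subtlety worth double-checking is that the $\tfrac{55}{28}$ mechanism of the preceding theorem indeed sets a (possibly data-dependent but value-independent) trade price, which is exactly the structural hypothesis the reduction uses in defining $p$; this is guaranteed by the characterization of truthful deterministic bilateral-trade mechanisms noted just before Proposition \ref{prop-bilateral-lowerbound}.
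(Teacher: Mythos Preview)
Your proposal is correct and matches the paper's approach exactly: the corollary is obtained by instantiating Lemma~\ref{lem:reduction-to-n-buyers} with the $\tfrac{55}{28}$-approximate bilateral-trade mechanism, and the paper presents it with no more than a one-line remark to that effect. Your additional check that the $\tfrac{55}{28}$ mechanism is indeed a posted-price mechanism (hence compatible with the reduction's use of $p$) is a nice touch that the paper leaves implicit.
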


\section{Combinatorial Exchanges}\label{sec-combinatorial}

In this section we consider \emph{combinatorial exchanges}: there are $n$ agents, each agent $i$ initially holds a subset $E_i$ of the items. Items are heterogeneous and indivisible. Every agent $i$ has a subadditive valuation $v_i$, 
that is, for every two bundles $S,T$ we have that $v_i(S\cup T)\leq v_i(S)+v_i(T)$.
Each $v_i$ is independently drawn from a distribution $F_i$. However, our mechanism will only require that the mechanism knows, for each agent $i$, the median value $MED_{i}$ for the bundle $E_i$ she initially owns.

Let $H_n$ be the $n$'th harmonic number ($H_n=\sum_{i=1}^n\frac{1}{i}$) and $t=\max_i|E_i|$. We present a mechanism that achieves an $8H_t$ approximation in this multi-parameter domain. In particular, if each player is initially endowed with at most one item we get an $8$-approximation. 

The basic step of the mechanism is in some sense a reduction to the bilateral trade problem: the bidders are randomly partitioned into two sets, ``buyers'' and ''sellers'', and each ``seller'' $i$ is offered to sell his endowment bundle at a price $MED_i$. Then we would like to take all the items that were sold and optimally allocate them among the ``buyers'' using VCG. The main obstacle is that VCG is not budget balanced. To overcome this we present a procedure that guarantees (approximate) welfare maximization while guaranteeing a minimum amount of revenue. Subsection \ref{subsec-global-reserve} describes this procedure and the mechanism itself is in Subsection \ref{subsec-combinatorial-median}.

\subsection{Detour: Combinatorial Auctions with Global Reserve}\label{subsec-global-reserve}

Consider the usual combinatorial auction setting, where a set $M$ of $m$ heterogeneous items that has to be allocated to $n$ bidders, each bidder $i$ has a valuation $v_i:2^M\rightarrow \mathbb R$. As usual we assume that each $v_i$ is normalized ($v_i(\emptyset)=0$) and non-decreasing. While the standard goal in the literature is to maximize welfare, assuming the auctioneer has no production cost for the items, in our case the auctioneer is only interested in selling the items to cover his non-negative cost $r$ of producing all items\footnote{The items are produced only if a sale is made. Here we assume for simplicity that the cost of producing the first item is $r$ and the cost of producing any additional item is $0$. This corresponds to the case that the production cost of items is governed by the start-up cost. A more realistic setup assumes a production cost for each item, or more generally for bundles of items. Indeed, the mechanism of Subsection \ref{subsec-combinatorial-median} essentially presents a solution for this case.}. We are interested in truthful and individually rational mechanisms.

Denote by OPT the value of the welfare-maximizing allocation of the items in $M$ to the bidders. Clearly, if $OPT<r$ then no such individually rational mechanism is possible. If $OPT=r$ then again no such mechanism exists but now the argument is a bit more delicate: by revenue equivalence, VCG is the only truthful and individually rational mechanism that always outputs an optimal solution, and it is easy to construct examples when its revenue is $0$. For example, consider two bidders and a set of two items $a$ and $b$. The value of bidder $1$ for any bundle that contains $a$ is $1$ (the value of any other bundle is $0$). Similarly the value of bidder $2$ for bundles that contain $b$ is $1$. VCG will allocate $a$ to bidder $1$ and $b$ to bidder $2$, but none of the bidders will pay anything.

We therefore relax our requirements: given $\alpha>1$, whenever $OPT\geq \alpha \cdot r$ the algorithm must allocate some items to the bidders and raise a revenue of at least $r$. Else, when $OPT< \alpha \cdot r$ the mechanism is not required to sell the items (but if it does sell the revenue must be at least $r$).

The challenge is of course to develop such a mechanism with $\alpha$ that is as small as possible, and we do so for $\alpha=H_n$. We use the (well known) observation that VCG generalizes to maximization of an affine function. Specifically, we ``adjust'' the welfare of an allocation $A=(A_1,\ldots, A_n)$ to be $\Sigma_iv_i(A_i)-H_{n_A}\cdot r$, where $n_A$ is the number of non-empty bundles in $A$. We now select the allocation with the highest ``adjusted welfare''. Payments are similar to VCG payments.

The mechanism is truthful since VCG is truthful. In addition, the mechanism allocates the items whenever $OPT\geq \alpha \cdot r$. To see this, observe that the mechanism allocates some items only if there is an allocation with a positive adjusted welfare (since the adjusted welfare of the empty allocation is $0$). Now recall that the adjusted welfare of the optimal allocation is at least $OPT-H_n\cdot r$.

All that is left is to prove that when the mechanism allocates some items then the revenue is at least $r$. Suppose that the mechanism outputs the allocation $A=(A_1,\ldots, A_n)$. Consider some bidder $i$ with $A_i\neq \emptyset$. In the VCG mechanism bidder $i$ pays his ``damage to society''. Observe that the damage to society of bidder $i$ is at least $\frac r {H_{n_A}}$: consider the allocation $A'$ that allocates each bidder $i'\neq i$ the same set of items and allocates nothing to bidder $i$. If we ignored the preferences of bidder $i$, we could have chosen the allocation $A'$ and increase the adjusted welfare by
$$
(\Sigma_{i'\neq i}v_{i'}(A_{i'})-H_{n_A-1}\cdot r)-(\Sigma_{i'\neq i}v_{i'}(A_{i'})-H_{n_A}\cdot r)=H_{n_A-1}\cdot r-H_n\cdot r=\frac r {n_A}
$$
I.e., the payment of each bidder $i$ with $A_i\neq \emptyset$ is at least $\frac r {n_A}$. Since by definition there are at least $n_A$ such bidders, the total revenue is at least $r$, as required. 

\subsection{The Combinatorial Median Mechanism}\label{subsec-combinatorial-median}

Borrowing ideas from the procedure for combinatorial auction with a global reserve and the median mechanism, we get the \emph{Combinatorial Median} mechanism for combinatorial exchanges.

\begin{enumerate}
\item Each player is assigned to either group $S$ or group $B$ uniformly at random.
\item Each player $i\in S$ will be offered a price equal $MED_{i}$ (i.e., the median value of her endowment). Let $\hat S$ denote the set of players in $S$ that accepted the price. The total set of items of players in $\hat S$ endowments is denoted by $E_{\hat{S}}=\cup_{i\in \hat S}E_i$.
\item Given an allocation $A$ of items in $E_{\hat{S}}$ to players in $B$, denote for each $i\in \hat S$ by $t_i$ the number of buyers that hold in $A$ at least one item from $E_i$, i.e., $t_i=|\{j|A_j\cap E_i\neq \emptyset \}|$. Let $c_A=\Sigma_{i\in \hat S} H_{t_i} \cdot MED_i$.
\item Run a VCG auction for the items $E_{\hat{S}}$ among the bidders $B$ where we penalize the welfare of an allocation $A$ by $c_A$, taking into account the endowments of bidders in $B$. I.e., we find the allocation $A$ that maximizes: $\Sigma_{i\in B}v_i(A_i+E_i)-c_A$.
\item Consider seller $i\in \hat{S}$. If at least one item from his endowment $E_i$ is sold in the VCG auction, then $i$ is paid $MED_{i}$ and loses all his endowment. Else, seller $i$ keeps his endowment and is not paid anything. Each buyer is allocated the items he won in the VCG auction (in addition to his endowment) and pays his VCG payment.
\end{enumerate}

The role of the $c_A$'s is to guarantee budget balance in a slightly more complicated way than was described earlier. This introduces inefficiency to the market, but we show that this loss is bounded.

\begin{theorem}
The Combinatorial Median mechanism is a $8H_t$-approximation to the optimal social welfare. It is truthful, ex-post individually rational and weakly budget balanced.
\end{theorem}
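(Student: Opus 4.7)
My plan handles the three mechanism-design properties (truthfulness, ex-post individual rationality, weak budget balance) essentially by inspection, then tackles the $8H_t$ approximation, which is the main technical burden. For truthfulness, I would condition on the random partition. Every seller $i \in S$ then faces a take-it-or-leave-it offer at the fixed price $MED_i$, independent of her report, so accepting iff $v_i(E_i) \leq MED_i$ is weakly dominant. Every buyer $i \in B$ participates in an affine-maximizer (VCG on $\sum_{i \in B} v_i(A_i + E_i) - c_A$, where $c_A$ depends only on the allocation and the fixed prices $MED_i$ for $i \in \hat S$), hence truthful by standard VCG theory. Universal truthfulness follows. Ex-post IR is equally immediate: an accepting seller receives $MED_i \geq v_i(E_i)$; a rejecting seller keeps $E_i$; and each buyer $j \in B$ has utility $V^\ast - V^{-j} \geq v_j(E_j)$, witnessed by the feasible allocation that sets $A_j = \emptyset$ and leaves the $-j$-optimal allocation untouched.

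For weak budget balance, I would adapt the argument of Section~4.1. Let $A^\ast$ be the chosen VCG allocation with counts $t_i = |\{j \in B : A^\ast_j \cap E_i \neq \emptyset\}|$. Fix a buyer $j$ with $A^\ast_j \neq \emptyset$ and form $A'$ from $A^\ast$ by setting $A'_j = \emptyset$ while leaving the other buyers intact. For each seller $i$ with $A^\ast_j \cap E_i \neq \emptyset$ the count drops by one, so $c_{A^\ast} - c_{A'} = \sum_{i : A^\ast_j \cap E_i \neq \emptyset}(H_{t_i} - H_{t_i - 1}) MED_i = \sum_{i : A^\ast_j \cap E_i \neq \emptyset} MED_i / t_i$. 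Plugging $A'$ into the VCG payment formula gives $p_j \geq c_{A^\ast} - c_{A'}$; summing over $j$, each seller $i$ with $t_i \geq 1$ contributes exactly $MED_i$ to the total revenue, which covers the payments the mechanism makes to these sellers.

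The main obstacle is the $8H_t$ approximation. I would fix an optimal allocation $A^\ast$ and bound the mechanism's expected welfare via a three-step argument. First, subadditivity gives $v_i(A^\ast_i) \leq \sum_k v_i(A^\ast_i \cap E_k)$, so $OPT$ decomposes into a ``kept'' part (terms with $k = i$) and a ``transferred'' part (terms with $k \neq i$). Second, since $MED_i$ is the median of $v_i(E_i)$ we have $MED_i \leq 2\,E[v_i(E_i)]$; combined with monotonicity and the fact that the ``keep everything'' welfare is at most $OPT$, this yields $\sum_i MED_i \leq 2\,OPT$. Third, I would condition on the random partition: for each pair $(i,k)$ with $k \neq i$, the joint event $\{k \in \hat S,\ i \in B\}$ holds with probability at least $1/8$ (factor $1/4$ for the partition and $1/2$ for $k$'s acceptance), and in that event the target allocation mimicking $A^\ast$ restricted to items of $E_{\hat S}$ going to $B$ is feasible for the VCG-with-penalty subproblem; the penalty of any such allocation is at most $H_t \sum_{i \in \hat S} MED_i$ since each $t_i \leq |E_i| \leq t$. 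By VCG optimality the mechanism's realized welfare dominates the target's value minus its penalty, and combining the $1/8$ probability, the factor $2$ from the median bound, and the $H_t$ penalty factor yields the $8H_t$ approximation.

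The hardest step is the third one: converting the piecewise subadditive decomposition into a single allocation that is feasible for the mechanism's VCG problem, while also accounting for items a buyer might already hold in $E_j$ and arguing the penalty does not outweigh the transferred welfare. A clean way to organize the bookkeeping is to separate the contributions of ``buyers keeping $E_j$'', ``items moved from $\hat S$ to $B$'', and the penalty charged to the VCG objective, and to argue the trade-off between the transferred welfare (which pays off against the subadditive $k \neq i$ terms) and the penalty (which only costs $H_t$ times a quantity already controlled by $OPT$).
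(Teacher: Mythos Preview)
Your treatment of truthfulness, ex-post individual rationality, and weak budget balance is correct and essentially identical to the paper's. The approximation argument, however, has a genuine gap.

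The problem is your handling of the penalty $c_A$. You bound it by $H_t\sum_{i\in\hat S}MED_i$ and then propose to control $\sum_i MED_i$ via the median inequality $MED_i\le 2\,E[v_i(E_i)]$, obtaining $\sum_i MED_i\le 2\,OPT$. This is true, but it only gives $E[c_{\tilde A}]\le \tfrac{H_t}{2}\,OPT$. Your lower bound on the mechanism then reads (at best) ``target value minus penalty'', where the target value is $O(OPT)$ and the penalty is $\Theta(H_t\cdot OPT)$. For $H_t>1$ this difference is negative, and the bound is vacuous; no combination of the factors $1/8$, $2$, and $H_t$ rescues it. In short, controlling the penalty by $OPT$ is the wrong comparison: the penalty must be charged against something that is already \emph{inside} $ALG$, not against $OPT$.

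The paper supplies exactly this missing piece. It does not bound $\sum_{i\in\hat S}MED_i$ by $OPT$ at all; instead it observes that the sellers in $S\setminus\hat S$ (those who reject the offer) keep their endowments and have $v_i(E_i)>MED_i$. Since each seller in $S$ accepts with probability $\tfrac12$ (this is precisely why the median is used), in expectation $\sum_{i\in S\setminus\hat S}MED_i=\sum_{i\in\hat S}MED_i$, so the retained welfare of rejecting sellers is at least $\tfrac{1}{H_t}$ of the penalty bound. Writing $A$ for the target value, one gets $ALG_{S,B}\ge E_v[\max\{0,A-c_A\}]+\tfrac{1}{H_t}E_v[c_A]$, and a two-case argument (according to whether $E_v[A]\gtrless E_v[c_A]$) yields $ALG_{S,B}\ge \tfrac{1}{H_t}E_v[A]$. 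This is how the $H_t$ factor actually emerges. A secondary issue: your piecewise subadditive decomposition $v_i(A^\ast_i)\le\sum_k v_i(A^\ast_i\cap E_k)$ upper-bounds $OPT$, but does not lower-bound the target welfare (you would need the reverse inequality for that). The paper instead uses the ``random half'' lemma $E[v(R)]\ge\tfrac12 v(S)$ to lower-bound $E[v_i(A^\ast_i\cap E_{\hat S})]$ directly.
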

\begin{proof}
In the analysis we use the following simple folklore observation:
\begin{claim}
\label{obs:subadditive-random}
Let $v$ be a subadditive valuation. Let $S$ be a set and let $R$ be a randomly constructed set that is obtained by selecting each item in $S$ with probability $\frac 1 2$, uniformly at random. Then, $E[v(R)]\geq \frac{1}{2}v(S)$.
\end{claim}
\begin{proof}
By subadditivity and the random choice of $R$:
$
E[v(R)]  \geq E[v(S)-v(S \setminus R)] \geq v(S)-E[v(R)]
$.
\end{proof}

The proof uses the following notation: given a set of players $T$ and a set of items $R$, let $OPT_{T \leftarrow R}$ be the optimal allocation of items in $R$ to players in $T$. Given an allocation $A$, let $A|_{T,R}=\Sigma_{i\in T}v_i(R\cap A_i)$ (the value of players in $T$ from items in $R$). 

\begin{claim}
The mechanism achieves in expectation at least $\frac 1 {2H_t}$ of the share of the optimal efficiency gained by bidders in $B$ receiving the items 
$E_{\hat{S}}$, that is:
\begin{align*}
ALG  \geq \frac{1}{2H_t} E_v  \big[ E_{S,B} \big[
OPT(v)|_{B,E_{\hat{S}}}
\big] \big]
\end{align*}
\end{claim}
\begin{proof}
Let $ALG_{S,B}$ be the expected efficiency of our mechanism after partitioning to $S,B$. We first argue that the effect of the $c_A$'s on the welfare maximizing allocation is limited: given $v,S,B$, our mechanism outputs the welfare-maximizing allocation of items in $E_{\hat{S}}$ to the bidders in $B$, taking the $c_A$'s into account. The efficiency of the outcome ais composed of three terms: the value of the allocation $A$, minus $c_A$, plus the value of the players $S\setminus \hat{S}$ that did not accept the median price and consume their own endowment.

The value of $c_A$ is at most $H_t$ times the sum of the medians of sellers that agreed to sell their items, and the value of players in $S\setminus\hat{S}$ is at least the sum of the medians offered to them. Since sellers belong to $\hat S$ independently with probability $\frac 1 2$, we conclude that, in expectation, the total value of the players in $S\setminus \hat{S}$ is at least $\frac 1 {H_t}$ of $c_A$. Let $A=OPT(v)|_{B,E_{\hat{S}}}$. We have that:
\begin{align}
ALG_{S,B} &\geq E_v \big[\max\{0,A -c_{A}\}]+\frac {E_v[c_{A}]} {H_t} \geq \frac {E_v [ A]} {H_t}\nonumber
\end{align}
Since $E_v \big[\max\{0,A -c_{A}\}]\geq E_v \big[A] -E_v[c_{A}]$
substituting  $E_v[c_{A}]$ instead of $E_{v}\big[A]$ yields the second inequality when $E_{v}\big[A]- E_v[c_{A}]\geq 0$.
When $E_{v}\big[A]- E_v[c_{A}]< 0$, we observe that $E_v \big[\max\{0,A -c_{A}\}]\geq 0$ and the second inequality follows.

%$ALG_{S,B}\geq \max(0,E_v [ E_{S,B} \big[OPT(v)|_{B,E_{\hat{S}}}]]-c_A)+\frac {c_A} {H_t}\geq \frac {E_v [ E_{S,B} \big[OPT(v)|_{B,E_{\hat{S}}}]]} {H_t}$. 
Thus, all we have to do is to bound the value of the nominator (where $\sigma$ denotes an optimal allocation of items in $S$ to $B$, given a partition $S,B$)\footnote{In the analysis, we ignore the endowments that players in $B$ continue to hold, as it can only improve the performance of our mechanism.}:

\begin{align}
E_v \big[OPT(v)|_{B,E_{\hat{S}}}\big] & \geq   E_{v} \left[ OPT_{B \leftarrow E_{\hat{S}} }(v) \right] \geq  E_{v} \left[ OPT_{B \leftarrow E_{\hat{S}}}(v)|_{ B,E_{\hat{S}}} \right] \nonumber \\
& =  E_{v \textrm{ of B}}  \left[ \;\;
\sum_{i\in B }  E_{v \textrm{ of S}} \left[ \; v_i\left( \; \sigma_i \cap E_{\hat{S}} \; \right)\; \right]
\;\; \right]  \nonumber \\
& \geq  E_{v \textrm{ of B}}  \left[\sum_{i\in B } \frac{1}{2} v_i(\sigma_i)
\right]  = \frac{1}{2} E_{v} \left[ OPT_{B \leftarrow E_{\hat{S}}}(v)
\right]
 \geq \frac{1}{2}  E_{v} \left[
OPT(v)|_{B,E_{\hat{S}}}
\right] \nonumber
\end{align}
Where in the third-to-last transition we use Observation \ref{obs:subadditive-random}. As the preferences $v$ and the partition $S,B$ are drawn independently, it follows that:

\begin{align}
ALG & =  E_{S,B} \big[ ALG_{S,B} \big] \big]
& \geq \frac{1}{2H_t} E_{S,B}  \big[ E_{v} \big[
OPT(v)|_{B,E_{\hat{S}}}
\big] \big]
&= \frac{1}{2H_t} E_{v}  \big[ E_{S,B} \big[
OPT(v)|_{B,E_{\hat{S}}}
\big] \big]\nonumber
\end{align}
This concludes the proof of the lemma.
\end{proof}

We now show that the random partition to $S$ and $B$ generates an efficiency loss which is not greater than a factor of $4$. Given a profile $v$, let $\sigma^{v}$ be the optimal allocation for $v$.
Then,
\begin{align}
E_{S,B} \big[  OPT(v)|_{B,E_{\hat{S}}} \big] =& \sum_{i=1}^n \Pr\big( i \in B \big) \cdot E_{S,B} [  v_i(\sigma^v_i \cap E_{\hat{S}}) ]
\geq&  \sum_{i = 1}^n \frac{1}{2} \cdot \frac{1}{2} v_i(\sigma^v_i)
=& \label{eq:parition-S-B-1/4-approx}\frac{1}{4} OPT(v)\nonumber
\end{align}

Where the second transition is due to Observation \ref{obs:subadditive-random}. Together with the lemma we have that ALG is a $8H_t$-approximation to OPT.

The mechanism is clearly truthful, as the agents in $B$ participate in a VCG auction (more precisely, an affine maximizer), the agents in $S$ face take-it-or-leave-it offers, and agents are randomly assigned to $S$ and $B$.

Finally, we show that the mechanism is budget balanced. Let $A$ be the allocation that the VCG mechanism outputs (i.e., each player $i\in B$ receives $A_i+E_i$). Consider bidder $i\in B$ that receives some items that were initially endowed to some set of players $S'\in S$. Let $A'$ be the allocation in which $A_{i'}=A'_{i'}$ for all $i\neq i'$ and $A'_i=\emptyset$. Recall that in the VCG payment formula bidder $i$ pays his ``damage to society'' which is at least (comparing the welfare of $A$ and and the welfare of $A'$):
\begin{align*}
\Sigma_{i\neq i', i\in B} \left( v_i(A_i+E_i)-c_A \right) &-\Sigma_{i\neq i', i\in B} \left( v_i(A'_i+E_i)-c_{A'} \right) =c_A - c_{A'}\\
=& \Sigma_{i'\in S'} H_{t_{i'}} \cdot MED_{i'} - \Sigma_{i'\in S'} \cdot H_{t_{i'}-1} \cdot MED_{i'}\\
=& -\Sigma_{i'\in S'} \frac {MED_{i'}} {t_{i'}}
\end{align*}
In other words, we can think of each player $i$ that got at least one item from $E_{i'}$ as paying $\frac {MED_{i'}} {t_{i'}}$ to player $i'$ (player $i$ might pay an additional amount of money, but this additional amount is ``burned''). Since by definition there are $t_{i'}$ players that got at least one item from $E_{i'}$, the total payment that $i'$ gets is exactly $MED_{i'}$ which equals to the amount that $i'$ receives, as needed.
\end{proof}

\section{Arrow-Debreu Markets}\label{sec-multiunit}

In this section we give a constant approximation mechanism for a multi-parameter environment without any distributional assumptions. We consider the following setting \cite{AD54}: there is one divisible good and $n$ players. Each player $i$ has a valuation function $v_i:[0,1]\rightarrow \mathbb R$, and for every $x,y$ we define the marginal valuation $v_i(x|y)=v_i(x+y)-v_i(y)$.%liad added
 We assume that the valuation functions are normalized ($v_i(0)=0$),
%non increasing,
non decreasing, %liad changed
and have decreasing marginal valuations (i.e., $v_i(\epsilon|x)\geq v_i(\epsilon|y)$ for every $\epsilon>0,y>x$).\footnote{
When $v_i(\cdot)$ is twice differentiable, we simply assume that $v_i^{''}(x) \leq 0$ for every $x$.
} %liad added: formal definition - is this what we need?
Denote the initial endowment of player $i$ by $r_i$, $r_i\geq 0$, where $\Sigma_i r_i=1$. Observe that given some price $p$, the supply that a seller $i$ is willing to sell is an amount $x_i$ that maximizes his payoff $p \cdot x_i -v_i(x_i|r_i-x_i)$.

Our mechanism is in many respects a varaiant of the Combinatorial Median mechanism for combinatorial exchanges. Similarly, players are divided into ``sellers'' and ``buyers'' (but in a subtler way). The constant approximation ratio is achieved by replacing the revenue extraction procedure of the combinatorial exchanges mechanism with a method that allows the separate sell of items. This will be the key to obtaining a constant approximation in the worst case. We assume that no single player initially holds a huge chunk of the good; Notice that if, say, one player is initially endowed with all the good then no prior-free mechanism can achieve a bounded approximation.

\begin{theorem}
Suppose that for all $i$, $r_i\leq \frac 1 3 $. Then, there exists a
%universally - we define truthful=universally truthful.
truthful, weakly budget balanced mechanism that provides an expected approximation ratio of $48$ in every instance.
\end{theorem}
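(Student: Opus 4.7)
The plan is to adapt the Combinatorial Median blueprint to the Arrow-Debreu setting, with two crucial modifications that exploit the single divisible good and the concavity of the valuations. The first modification replaces the median-of-endowment threshold (which required distributional knowledge) by a \emph{mid-supply price} $p_i$ for each seller $i$ --- the per-unit price at which $i$ prefers to sell exactly $r_i/2$ of her endowment. By decreasing marginals, $i$ weakly prefers to sell any quantity $x \leq r_i/2$ at per-unit price $p_i$, a property that will drive individual rationality and the accounting of payments. The second modification replaces the global revenue-extraction auction of Section~\ref{subsec-global-reserve} by a \emph{per-seller dummy bidder} that values fractions of the good linearly at rate $p_i$ up to a cap of $r_i/2$; the dummy imposes a per-unit reserve that guarantees enough revenue from buyers while preserving welfare maximization inside the VCG.

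Concretely, the mechanism will (i) partition the agents uniformly at random into sellers $S$ and buyers $B$; (ii) determine each $p_i$ from seller $i$'s reported valuation; (iii) run an affine-maximization VCG auction among buyers in $B$ together with the dummy bidders, allocating the aggregate supply $\sum_{i \in S} r_i/2$; and (iv) transfer from each seller $i$ whatever portion of her supply the VCG assigns to real buyers, paying her $p_i$ per transferred unit. Individual rationality on the seller side follows from the defining property of $p_i$ and concavity of $v_i$; IR for buyers is standard VCG. Weak budget balance holds because any unit allocated to a buyer out of seller $i$'s supply generates a VCG payment of at least $p_i$ --- otherwise the dummy would have received that unit in the welfare-maximizing allocation --- so total buyer payments dominate total seller payouts.

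The approximation analysis mirrors the Combinatorial Median proof. We decompose $OPT$ along agents and argue that a constant fraction of its welfare survives: (a) the uniformly random partition into $S$ and $B$ yields a constant loss as in the combinatorial exchanges analysis; (b) capping each seller's transferable supply at $r_i/2$ loses at most another constant factor by concavity of the buyers' valuations --- here the hypothesis $r_i \leq \tfrac{1}{3}$ is used to ensure that no single agent's endowment dominates the optimal allocation; (c) the per-seller pricing versus fully joint welfare maximization contributes one further constant. Multiplying these factors produces the claimed bound of $48$.

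The principal obstacle is seller-side truthfulness. A naive scheme in which seller $i$ reports a scalar $p_i$ and is simply paid $p_i$ per unit sold is \emph{not} incentive compatible: with enough buyer demand, seller $i$ strictly benefits from inflating $p_i$, since this increases the per-unit payment without suppressing trade. The remedy, in the spirit of the Median Mechanism for bilateral trade, will be to structure the mechanism so that $i$'s reported valuation determines the per-unit payment in a way that makes truthful reporting a dominant strategy against the concave $v_i$ --- essentially by converting her decision into a take-it-or-leave-it offer against a demand curve of $B$ that is independent of her own report. Carefully verifying this, and tightening the accumulated constants to reach $48$, will be the bulk of the technical work.
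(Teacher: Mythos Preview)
Your proposal has a genuine gap on exactly the point you flag as the ``principal obstacle'': seller-side truthfulness. You write that the remedy will be to ``structure the mechanism so that $i$'s reported valuation determines the per-unit payment'' while simultaneously ``converting her decision into a take-it-or-leave-it offer \ldots\ that is independent of her own report.'' These two requirements are in direct tension, and your sketch does not resolve it. In a two-group partition $S$/$B$ where each seller $i$'s per-unit price $p_i$ is computed from her own reported valuation, seller $i$ can always inflate $p_i$ by misreporting a steeper marginal schedule; as you yourself note, with sufficient buyer demand this raises her payment without suppressing trade. No amount of ``tightening constants'' will repair this --- it is a structural failure of the two-group design, not a calibration issue.

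The paper's fix is the missing idea: it partitions into \emph{three} groups --- buyers $N_1$, a \emph{statistics} group $N_2$, and sellers $N_3$ --- and computes a single mid-supply price $p$ from the reports of $N_2$ alone. Since $N_2$ neither buys nor sells, its members have no incentive to misreport, and since $p$ is independent of every seller's report, each seller in $N_3$ faces a genuine take-it-or-leave-it per-unit price and truthfully reports her supply at $p$. Random assignment of roles ensures that with probability at least $\tfrac{1}{2}$ the statistics group's mid-supply price weakly exceeds that of the sellers group, so the sellers collectively offer at least $\tfrac{1}{8}$ of the good. The single dummy bidder (with linear value at rate $p$ up to the supplied quantity) then enforces budget balance in VCG, and the $48$ arises as $3 \cdot 2 \cdot 8$ from the role randomization, the price-comparison event, and the $\tfrac{1}{8}$ scaling. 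The hypothesis $r_i \leq \tfrac{1}{3}$ is used not to bound any single agent's contribution to $OPT$, as you suggest, but to guarantee that three substantial groups can be formed (or to handle the residual cases separately).
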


We first provide a mechanism assuming that the bidders can be divided to $3$ groups $N_1$, $N_2$, $N_3$ where each set $N_k$ is \emph{substantial}: $\Sigma_{i\in N_k}r_i\geq \frac 1 4$. Later we relax the requirement; we will only assume that for every $i$,  $r_i\leq \frac 1 3 $. We need the following definition:
\begin{definition}
Let $N_k$ be a substantial set of bidders. The \emph{mid-supply price} of $N_k$ is the minimal price $p$ such that the total amount that bidders in $N_k$ are willing to sell at price $p$ is at least $\frac 1 8$.
\end{definition}

The mechanism itself is a bit heavy on details, although the basic idea is quite simple. We therefore start with an informal description, and then move on to a formal one. Initially, we have an arbitrary division of the bidders into three arbitrary substantial groups, $N_1$, $N_2$, and $N_3$, with roles selected at random: $N_1$ wil be the group of buyers, $N_2$ the statistics, and $N_3$ the sellers.

We use the statistics group to compute a mid-supply price $p$: that is, the price for which bidders in the statistics group are willing to sell half of their total supply (which is at least $\frac 1 8$ of the good). Each bidder in $N_3$ is asked to report the maximum amount of his endowment that he is willing to sell at price $p$. Let $t$ be the total amount that the sellers are willing to sell (in fact, we have to make sure that $t\leq \frac 1 8$ -- see the formal description for exact implementation details). 
%We limit the total amount $t$ that the sellers are willing to sell is at most $\frac 1 8$ by restricting the amount of good that some sellers may sell.

Now, run VCG with the participation of the buyers and an extra additive buyer with valuation $v_d(s)=\min(t,s )\cdot p$. 
This bidder is added to ensure compliance with the budget balance requirement.
The set of possible allocations in this VCG mechanism equals to all distributions of amount $t$ of the good among the buyers and the extra additive buyer. Effectively, we show that this amounts to finding a welfare maximizing allocation of $t$ fraction of the good among the buyers so that each buyer that received an amount of $x$ pays at least $x\cdot p$. We use this money to pay each bidder $i\in N_3$ a total sum of $x_i\cdot p$, where $x_i$ is the part of the endowment that was taken from bidder $i$. We now provide a formal description of the mechanism, followed by its analysis.

\subsection*{The Formal Mechanism:}
\begin{enumerate}
\item Let $N_1$, $N_2$ and $N_3$ be three substantial groups of bidders. 

\item   Select at random ``roles'' for the groups $N_1$, $N_2$, $N_3$: players in one group will be the \emph{buyers} (without loss of generality, $N_1$), another group will be the \emph{statistics} group (without loss of generality, $N_2$), and players in the additional group are the \emph{sellers} ($N_3$). %We add each player $i$ with $r_i=0$ to the buyers group $N_1$.

\item \label{item:multi-unit-mechanism-pricing-stage} Let $p$ be the mid-supply price of the statistics group $N_2$. Each seller from the sellers group $N_3$ reports the amount of good $x'_i$ he is willing to sell at price $p$. Let $t=\min\{\frac 1 8,\Sigma_{i\in N_3}x_i\}$.

If $\Sigma_{i\in N_3}x'_i\leq \frac 1 8$, let $x_i=x'_i$. If $\Sigma_{i\in N_3}x'_i>\frac 1 8$, choose a value $x_i$ for each $i$ such that $x_i\leq x'_i$ and $\Sigma_{i\in N_3}x_i=\frac 1 8$.\footnote{ Formally, order the buyers arbitrarily, and let %$x_i=\max(0,\min(\frac 1 8- \Sigma_{i'>i}x'_{i'})$ if
$x_i=\max\{0,\frac 1 8 - \Sigma_{i'>i}x'_{i'}\}$ if $x'_i+\Sigma_{i'>i}x'_{i'}\geq \frac 1 8$.}

\item For each player $i\in N_1$, let $v'_i(s)=v_i(s|r_i)$.
Let $N'$ be a set that consists of all players in $N_1$ and one additional dummy bidder. %liad removed - additional notation and confusing.

Use the VCG mechanism to
%find the welfare-maximizing allocation of
sell $t$ fraction of the good to buyers in $N'$, where the valuation of each player $i\in N_1$ is $v'_i$ and the valuation of the dummy bidder is $v_d(s)=\min(t,s )\cdot p$.

\item \label{item:multi-unit-mechanism-dummy-stage} The output of the mechanism is as follows: each bidder $i\in N_1$ pays to the mechanism the VCG payment of $v'_i$ and receives the same amount of good that $v'_i$ received (in addition to his endowment $r_i$). Bidders in $N_2$ keep their initial endowment and do not pay anything.

Let $t'\leq t$ be the amount of good that the dummy player ended up with in the VCG mechanism. Choose some $x''_i$'s such that for each $i\in N_3$, $x''_i\leq x_i$ and $\Sigma_{i\in N_3}x''_i=t-t'$, taking the good first from bidders with lower indices. Each bidder $i\in N_3$ keeps $r_i-x''_i$ of the good and receives a payment of $x''_i\cdot p$.
\end{enumerate}

\begin{claim}
The above mechanism is truthful.
\end{claim}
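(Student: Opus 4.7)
Roles are assigned at random independently of any agent's report, so it suffices to show truthful reporting is weakly dominant in each of the three roles. The \emph{statistics} role is immediate: such a player keeps her endowment and receives no payment regardless of what she reports. The \emph{buyer} role is essentially standard VCG: conditioned on the statistics' and sellers' reports, $t$ and $v_d$ are fixed, and the allocation/payment rule over $N_1 \cup \{\text{dummy}\}$ is a VCG affine maximizer; each buyer's true utility $v_i(r_i + a_i) - p_i$ differs from the ``VCG utility'' $v'_i(a_i) - p_i$ only by the constant $v_i(r_i)$, so dominant-strategy truthfulness follows.

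The substantive case is the \emph{seller} role. Fix all other reports (so $p$ is fixed) and let $x_i^\star$ be seller $i$'s true supply at price $p$, i.e., the maximizer of the concave function $f(x) := p x + v_i(r_i - x)$; by decreasing marginals, $f$ is weakly increasing on $[0, x_i^\star]$ and weakly decreasing thereafter. The seller's realized utility is exactly $f(x''_i)$. The key structural observation is that $v_d(s) = p \cdot \min(t, s)$ is linear with slope $p$ on $[0, t]$, so the VCG welfare-maximization reduces to $\max_{a \geq 0,\ \sum_j a_j \leq t}\ \sum_j (v'_j(a_j) - p a_j)$ plus a constant. Its optimum is attained at the unconstrained concave maximizer $a^\star$ capped at the budget $t$; writing $A := \sum_j a^\star_j$ (which depends only on buyer valuations, not on any seller's report), the real-buyer allocation equals $t - t' = \min(t, A)$. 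Combined with the low-index priority rule $x''_i = \min(x_i, (t - t') - \sum_{j < i} x_j)^+$, a direct computation shows that under truthful reports we have $x''_i = x_i^\star$ when $T := \sum_j x_j^\star \leq A$, and $x''_i = \min(x_i^\star,\ \max(0, A - \sum_{j<i} x_j^\star)) \leq x_i^\star$ when $T > A$. Either way, truthful reporting places $x''_i$ in the non-decreasing region $[0, x_i^\star]$ of $f$.

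\textbf{Main obstacle.} It remains to rule out profitable deviations. Under-reporting $x'_i < x_i^\star$ weakly lowers $x_i$ and $t$, hence weakly lowers $x''_i$ while keeping it in $[0, x_i^\star]$, so $f(x''_i)$ weakly decreases. Over-reporting $x'_i = x_i^\star + \delta$ is the delicate case and splits according to the regime of the \emph{truthful} play. In the supply-ample regime $T \geq A$ the extra $\delta$ units are absorbed entirely by the dummy ($t - t'$ remains $A$), so $x''_i$ either stays at its old value (if the low-index priority was the binding constraint, in which case the bound $A - \sum_{j<i} x_j^\star$ is unchanged) or, if the seller previously already received $x_i^\star$, grows strictly above $x_i^\star$; both outcomes weakly worsen $f$. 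In the supply-short regime $T < A$ the truthful outcome was $x''_i = x_i^\star$, and since $T < A$ forces $A - \sum_{j<i} x_j^\star \geq \sum_{j\geq i} x_j^\star \geq x_i^\star$, any $\delta > 0$ pushes $x''_i$ strictly above $x_i^\star$, again worsening $f$. The substantive insight throughout is that the dummy bidder acts as a ``shock absorber'' that insulates the clearing quantity $t - t'$ from any seller's unilateral manipulation of reported supply, so no deviation can move $x''_i$ from the non-decreasing side of $f$ to a point of strictly higher $f$-value.
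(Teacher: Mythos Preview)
Your role-by-role structure matches the paper's, and your treatment of the statistics and buyer cases is essentially identical to it. For sellers, the paper simply asserts without further justification that ``reporting any value above $x''_i$ will not affect the quantity that seller $i$ sells,'' whereas you try to actually derive this via the nice observation that the dummy absorbs supply shocks so that $t-t'=\min(t,A)$ with $A$ independent of any seller's report.

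There is, however, a genuine gap in your over-reporting analysis. You claim that in the supply-ample regime the step-5 bound ``$A-\sum_{j<i}x_j^\star$'' is unchanged by seller $i$'s over-report. But the operative bound is $(t-t')-\sum_{j<i}x_j$, using the step-3 \emph{capped} quantities $x_j$, and by the paper's footnote the step-3 rule $x_j=\max\{0,\tfrac18-\sum_{j'>j}x'_{j'}\}$ gives \emph{higher}-indexed sellers priority. Hence when seller $i$'s over-report pushes $\sum_j x'_j$ past $\tfrac18$, the caps $x_j$ on lower-indexed sellers $j<i$ tighten, which frees step-5 budget for seller $i$. Concretely, take two sellers with $x_1^\star=x_2^\star=0.05$ and buyer demand $A=0.03$: truthfully seller~2 sells $0$; reporting $x'_2=0.1$ forces $x_1$ down to $0.025$, and seller~2 now sells $0.005\in(0,x_2^\star)$, a strict improvement for her. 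Your case split also silently assumes $T\le\tfrac18$ (otherwise $x_j\neq x_j^\star$ already under truthful play, so neither ``$x''_i=x_i^\star$ when $T\le A$'' nor the displayed formula for $T>A$ is correct).

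Note that this is not merely a gap in your write-up: the mismatch between higher-index priority in step~3 and lower-index priority in step~5 makes the mechanism, read literally, non-truthful; the paper's own proof never engages with this interaction. If the two steps used a consistent priority order, your shock-absorber argument would go through essentially as written.
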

\begin{proof}
The mechanism is clearly truthful for the statistics group since they never sell nor receive any amount of the good. The mechanism is also truthful for the buyers since they are just participating in a VCG mechanism. To show that the mechanism is truthful for the sellers we have to use the fact that the valuations exhibit decreasing marginal utilities. First, observe that the price $p$ depends only on the valuations of the statistics group $N_2$.
Now, consider some seller $i \in N_3$. Bidder $i$ reports at stage \ref{item:multi-unit-mechanism-pricing-stage} a quantity $x'_i$ that maximizes his profit $x\cdot p+v_i(r_i-x)$. Therefore, if he eventually sells a fraction $x'_i$ he has no reason to report a different value. However, at stages \ref{item:multi-unit-mechanism-pricing-stage} and \ref{item:multi-unit-mechanism-dummy-stage} of the mechanism the quantity that he sells is reduced to $x_i$ or $x''_i$ that may gain him a lower profit. By the way that the mechanism reduces the quantities, reporting any value above $x''_i$ will not affect the quantity that seller $i$ sells. If seller $i$ reports a value smaller than $x''_i$, he will sell a quantity smaller than $x''_i$; This smaller quantity cannot gain him a greater profit since the profit is non-decreasing in $x$ in the range below $x'_i$ due to decreasing marginals.\footnote{
To see this, note that the derivative of
$x\cdot p$ is $p$ for every $x$.
Since $x'_i$ maximizes profit, and due to the convexity of $v_i$,
for every value $x<x'_i$ and the derivative of $v_i(r_i-x)$ is negative with absolute value of at most $p$. Therefore, the marginal profit is non-negative for $x<x'_i$. A similar argument holds also when $v_i$ is not differentiable.}
\end{proof}

\begin{claim}
The above mechanism is weakly budget balanced.
\end{claim}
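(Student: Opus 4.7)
The plan is to show that the total payment the mechanism collects from the buyers in $N_1$ is at least the total payment it hands out to the sellers in $N_3$ (the statistics group $N_2$ is irrelevant since it neither pays nor receives).

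First I would account for the two payment flows. On the seller side, by construction each $i \in N_3$ receives $x''_i \cdot p$, so the total outflow is $\sum_{i \in N_3} x''_i \cdot p = (t - t') \cdot p$, where $t'$ is the quantity the dummy bidder received in the VCG auction. On the buyer side, each $i \in N_1$ pays his VCG payment, so the central estimate I need is a uniform lower bound on the VCG payment of buyer $i$ in terms of the quantity $y_i$ he received.

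The main step is bounding the VCG payment of each buyer $i \in N_1$ by $y_i \cdot p$. Recall that the VCG payment of $i$ equals the damage he inflicts on the other participants, namely the optimal welfare achievable by the remaining bidders from the quantity $t$ minus the welfare they actually obtain. I would exhibit one feasible reallocation without $i$: keep every other buyer's allocation fixed and give the dummy bidder an extra $y_i$ units. This is feasible because the dummy's allocation grows from $t'$ to $t' + y_i$, which is at most $t$ (since $t' + \sum_{j \in N_1} y_j = t$), and the dummy's marginal value is exactly $p$ throughout the interval $[0,t]$. Hence this reallocation increases the welfare of the others by $y_i \cdot p$, which gives $\text{VCG payment}(i) \geq y_i \cdot p$. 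Summing over $i \in N_1$, the total VCG revenue is at least $\sum_{i \in N_1} y_i \cdot p = (t - t') \cdot p$, exactly matching the outflow to the sellers.

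The most delicate point is ensuring that the reallocation giving the extra $y_i$ to the dummy is actually admissible, which reduces to verifying that the dummy's additive-at-rate-$p$ valuation truly covers the interval $[0,t]$; this is guaranteed by the definition $v_d(s) = \min(t,s)\cdot p$ and the construction of $t$ in Step~\ref{item:multi-unit-mechanism-pricing-stage}. Once this is in place the weak budget balance inequality $\sum_{i \in N_1} \text{VCG payment}(i) \geq (t-t') \cdot p = \sum_{i \in N_3} x''_i \cdot p$ follows, and the claim is proved.
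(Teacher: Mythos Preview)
Your proposal is correct and follows essentially the same approach as the paper: both argue that each buyer's VCG payment is at least $y_i\cdot p$ by reallocating that buyer's share to the dummy bidder, then sum to obtain total revenue at least $(t-t')\cdot p$, which equals the outflow to sellers. Your version is slightly more explicit about why the reallocation is feasible (verifying $t'+y_i\le t$ so the dummy's marginal stays $p$), but the argument is the same.
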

\begin{proof}
Consider a buyer $i$ that received an amount of $t_i$ (not including his endowment $r_i$). His VCG price is at least $t_i\cdot p$, since otherwise we could have considered the same allocation except that an additional amount of $t_i$ of the good is allocated to the dummy bidder. We have that the total payment is at least $(t-t')\cdot p$, which is exactly the amount we have to pay to the sellers.
\end{proof}

The next lemma analyses the approximation ratio of the mechanism:

\begin{lemma}
If there are three substantial groups $N_1$, $N_2$, and $N_3$ then the mechanism provides an approximation ratio of $48$.
\end{lemma}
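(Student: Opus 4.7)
The plan is to show $E[\mathrm{ALG}]\ge OPT/48$, decomposing the constant as $48=6\cdot 8$: a factor of $6$ from averaging over the $3!=6$ role permutations of the three substantial groups, and a factor of $8$ from the supply cap $t\le 1/8$. The argument will mirror the analysis of the Combinatorial Median mechanism (Section \ref{sec-combinatorial}), with the concavity of the valuations playing the role of subadditivity and the VCG-plus-dummy construction playing the role of the revenue-extraction procedure.

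First I would establish a \emph{status-quo floor}: for every role assignment $(A,B,C)$, the mechanism's welfare is at least $\sum_i v_i(r_i)$. Buyers in $A$ weakly improve through VCG, statistics in $B$ keep their endowment exactly, and each seller $i\in C$ sells $x''_i$ voluntarily at price $p$, so by concavity $v_i(r_i-x''_i)+x''_i\cdot p\ge v_i(r_i)$; the payment $x''_i\cdot p$ is covered by the buyer-side VCG payments via the dummy. Writing $OPT=\sum_i v_i(r_i)+G^*$ with $G^*\ge 0$, and setting $G^{N_k}=\sum_{i\in N_k,\,x_i^*>r_i}(v_i(x_i^*)-v_i(r_i))$, one has $G^*\le \sum_k G^{N_k}$ since OPT's net sellers only decrease value relative to the status quo.

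Second I would lower-bound the additional gain captured by the VCG-plus-dummy phase for a fixed role assignment. The VCG allocation maximizes the virtual welfare $V=\sum_{i\in A}v'_i(t_i)+t'p$ over allocations summing to $t$. Comparing it to the \emph{scaled-OPT alternative} $y_i=\alpha\max(0,x_i^*-r_i)$ for $i\in A$, with $\alpha=\min(1,t/T^*_A)$ and $T^*_A=\sum_{i\in A,\,x_i^*>r_i}(x_i^*-r_i)\le 1$, and using the concavity inequality $v_i(r_i+\alpha z)-v_i(r_i)\ge\alpha(v_i(r_i+z)-v_i(r_i))$ for $\alpha\in[0,1]$, one shows $V\ge\alpha G^A+(t-\alpha T^*_A)\cdot p$. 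Converting virtual to actual welfare using buyer gain $=V-t'p$ and seller loss $\le(t-t')p$ yields the per-role bound $W_{A,B,C}\ge\sum_i v_i(r_i)+[\alpha(G^A-T^*_A\cdot p)]_+$.

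Third I would average over the six role permutations. Pair them by the choice of buyer group $A$: within each pair, the other two groups swap statistics and seller roles. By a pigeonhole over the two mid-supply prices, in at least one of the two the statistics group has the larger mid-supply price, so at that price the seller group (with the smaller mid-supply price) still supplies at least $1/8$ by definition; in this ``good'' assignment $t=1/8$ and $\alpha\ge 1/8$. Combining the captured gains in these three good assignments with the status-quo floor in the other three, and summing over all six permutations, yields $E[\mathrm{ALG}]\ge\sum_i v_i(r_i)+G^*/48\ge OPT/48$. The main obstacle will be the last combination step: showing that the captured gain $\alpha(G^A-T^*_A\cdot p)$ is a constant fraction of $G^{N_k}$, which requires relating the (possibly high) mid-supply price $p$ used in the good assignment to the OPT equilibrium price $p^*$; the concavity bound $G^A\ge T^*_A\cdot p^*$ together with the substantial-group assumption $\sum_{i\in N_k}r_i\ge 1/4$ and the careful pairing argument are what make this trade-off yield the stated constant.
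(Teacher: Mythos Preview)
Your per-role bound $W_{A,B,C}\ge\sum_i v_i(r_i)+[\alpha(G^A-T^*_A\cdot p)]_+$ is correct, and the pairing argument that yields $t=1/8$ (hence $\alpha\ge 1/8$) in one of the two statistics/seller assignments is exactly what the paper does. The gap is precisely where you flag it: the combination step. To reach $E[\mathrm{ALG}]\ge\sum_i v_i(r_i)+G^*/48$ you would need $\sum_k[G^{N_k}-T^*_{N_k}p_k]_+\ge G^*$, i.e.\ essentially $T^*_{N_k}p_k\le 0$ for each $k$. Your proposed resolution, bounding $G^A\ge T^*_A\cdot p^*$ via concavity, only gives $G^A-T^*_A p\ge T^*_A(p^*-p)$, which is useless when $p>p^*$. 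And nothing prevents this: the mid-supply price $p$ of the statistics group depends only on that group's valuations and can be arbitrarily larger than the clearing price of the optimal allocation. The ``careful pairing'' controls which of the two non-buyer groups has the \emph{larger} mid-supply price, which is the wrong direction for this purpose.

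The paper resolves this by a different accounting that you are missing. It does not track gain-from-trade; it bounds the buyers' total value plus the statistics group's endowment value directly. The key observation is that the statistics group, being substantial, retains at least $1/8$ of the good at the mid-supply price $p$, and by concavity this retained portion has value at least $p/8$. On the buyer side, comparing the VCG outcome to the scaled-down optimal allocation $o_i/8$ gives $\sum_{i\in N_1}v_i(s_i)\ge O_1/8-p/8$ (the $-p/8$ is exactly the dummy's maximum contribution). These two terms cancel: buyers plus statistics contribute at least $O_1/8$, with no residual dependence on $p$. Since $E[O_1]=OPT/3$ and the good event has probability $1/2$, one gets $OPT/48$. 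In your framework the same idea reads: $\alpha T^*_A\le t\le 1/8$, so the troublesome $-\alpha T^*_A p$ is at most $p/8$ in magnitude, and this is absorbed not by the global status-quo floor but specifically by the statistics group's endowment in the \emph{same} role assignment. Your decomposition hides this because it aggregates all endowment values into a single floor that does not interact with the per-assignment price.
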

\begin{proof}
Fix some optimal solution $(o_1,\ldots, o_n)$. For $k=\{1,2,3\}$, let $O_k=\Sigma_{i\in N_k} v_i(o_i)$. Observe that since each group of bidders plays the role of the buyers with probability exactly $\frac 1 3$, we have that $E[O_1]= OPT/3$. Let $p'$ be the mid-supply price of $N_3$ and recall that $p$ is the mid-supply price of $N_2$. Since the statistics group and the sellers group are chosen at random, with probability at least $1/2$ we have that $p\geq p'$. We will condition our analysis on that event and conservatively assume that if $p<p'$ then the welfare of the allocation that the mechanism outputs is $0$.

Now, if $p\geq p'$, the total amount of the good that bidders in $N_3$ are willing to sell is at least $\frac 1 8$: they hold at least $\frac 1 4$ of the good by our initial condition, and at price $p'$ they are willing to sell half of it, so surely they will agree to sell that amount at price $p\geq p'$. In particular we have that $\Sigma_ix_i= \frac 1 8$ with probability at least $\frac 1 2$.

\begin{claim}
For every $i\in N_1$, let $s_i$ denote the amount of good bidder $i$ receives in the final allocation. If $t=\frac 1 8$ then $\Sigma_{i\in N_1}v_i(s_i) \geq \frac {O_1} 8-\frac p  8$.
\end{claim}
\begin{proof}
Consider the allocation that gives each bidder $i\in N_1$ an amount of $s'_i=\frac {o_i} 8$. Since the valuations have decreasing marginals, the welfare of this allocation is at least $\Sigma_iv_{i\in N_1}(s'_i)\geq \frac {O_1} 8$ and no more than $\frac 1 8$ of the good was allocated to players in $N_1$.

Thus, the welfare of VCG is at least the welfare of $(s'_1,\ldots ,s'_n)$, but we have to subtract the contributtion of the dummy bidder. His contribution to the welfare is at most his maximum value: $\frac p 8$. Thus we have that
%$\Sigma_{i\in N_1}v_i(s_i)=\frac {O_1} 8-\frac p  8$.
$\Sigma_{i\in N_1}v_i(s_i) \geq \frac {O_1} 8-\frac p  8$.%liad:= to >
\end{proof}

Now notice that the value of bidders in the statistics group $N_2$ is at least $p\cdot t=\frac p 8$: they are not willing to sell $\frac 1 8=t$ of the good at price $p$, so their total value for their initial endowment (that they keep) is at least $p\cdot t$.

Hence we have that with probability at least $\frac 1 2$ (if $t=\frac 1 8$), it holds that $\Sigma_{i\in N_1}v_i(s_i)+\Sigma_{i\in N_2}v_i(s_i) =\frac {O_1} 8-\frac p  8 + \frac p 8=\frac {O_1} 8$. Recall that $E[O_1]=\frac 1 3$, and we get that the expected welfare is at least
%$\frac 1 2 \cdot \frac {E[O_1]} 8=\frac {OPT} {48}$,
$\frac 1 2 \cdot \frac {E[O_1]} 8 \geq \frac {OPT} {48}$,  %liad:= to >
as needed.\qed
\end{proof}

We now relax the requirement of substantial groups and show how the above mechanism can be used for the more general case. The proof of the lemma is in the appendix.

\begin{lemma}\label{lemma-multiunit-no-substantial}
If for every $i$ we have that $r_i\leq \frac 1 3$ then there exists a mechanism that provides an approximation of $48$.
\end{lemma}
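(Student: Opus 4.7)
Plan: I would prove this lemma by reducing to the substantial-groups case already established. Concretely, given $r_i \leq 1/3$ for every $i$ with $\sum_i r_i = 1$, the goal is to show that the bidders can be partitioned into three groups $N_1, N_2, N_3$ each satisfying $\sum_{i \in N_k} r_i \geq 1/4$. Once such a partition is exhibited, the substantial-groups mechanism and its analysis apply verbatim, immediately yielding the claimed $48$-approximation. All of the non-combinatorial content (truthfulness, weak budget balance, the mid-supply pricing, the VCG-with-dummy step, and the approximation analysis) is reused without modification.

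The central technical step is therefore a purely combinatorial partition lemma under the endowment cap $r_i \leq 1/3$. I would approach it as follows: sort the bidders in decreasing order of $r_i$ and distribute them greedily in the spirit of Longest Processing Time scheduling, placing each bidder into the group with the smallest current load. The upper bound $r_i \leq 1/3$ is used in two essential ways. First, it caps the overshoot of any single placement, bounding the imbalance introduced at each step. Second, any group whose final load exceeds $1/2$ must contain at least two items (since each item is at most $1/3$), which is what enables a corrective swap step to move a single item from an overloaded group to a group that fell short of $1/4$ during the greedy phase.

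The main obstacle is verifying correctness of this refinement in the tight regime, where several bidders have endowments close to $1/4$ and the greedy leaves one group just under the threshold. Here the argument must exploit the strict separation $r_i \leq 1/3 < 3/8$ to guarantee that an appropriate item of the right size exists in the overloaded group, so that moving it lifts the deficient group to at least $1/4$ without dropping the source group below that level. The case analysis is driven by the size of the largest endowment: when $r_{\max} \geq 1/4$ we seed one group with that bidder and then balance the remainder (of total at least $2/3$, with each item bounded away from $1/3$) across two groups; when every bidder is smaller, the greedy is nearly balanced to begin with and only small swap adjustments are needed.

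Once the combinatorial partition lemma is in hand, the proof of the lemma is immediate: feed the partition to the substantial-groups mechanism as $N_1, N_2, N_3$, and apply the earlier $48$-approximation analysis verbatim. Truthfulness and weak budget balance for the general case therefore inherit directly from those of the substantial-groups mechanism.
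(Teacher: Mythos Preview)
Your reduction hinges on a combinatorial claim that is \emph{false}: it is not true that every endowment vector with $r_i\le \tfrac13$ and $\sum_i r_i=1$ can be partitioned into three groups each of total at least $\tfrac14$. A concrete counterexample is $r_1=\tfrac13$ and $r_2=r_3=r_4=\tfrac29$. Any partition of these four bidders into three nonempty groups has two singleton groups, and at least one of those singletons must be one of the $\tfrac29$-bidders, which is strictly below $\tfrac14$. Hence no three substantial groups exist for this instance, and the LPT/greedy-with-swaps argument you outline cannot possibly succeed (the ``corrective swap'' you describe has no item of the right size to move). The cap $r_i\le \tfrac13$ is simply not strong enough to guarantee three substantial groups.

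The paper accordingly does \emph{not} attempt to always build three substantial groups. It splits into two cases. When at most one bidder has $r_i\ge \tfrac18$, all remaining endowments are small enough that two disjoint minimal sets $T_1,T_2$ of the small bidders can be carved out, each with total in $[\tfrac14,\tfrac38]$; then $T_1$, $T_2$, and the complement are substantial, and the original mechanism applies directly. When two bidders $i,j$ have $r_i,r_j\ge \tfrac18$ (which covers the counterexample above), the paper abandons the three-substantial-groups template: it lets $i$ and $j$ play the statistics and sellers roles singly, runs the mechanism with $t=\tfrac{1}{16}$ instead of $\tfrac18$, and combines this with a separate individual-rationality argument bounding the welfare already held by $i$ and $j$ via their endowments. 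Your proposal is missing this second case entirely; to repair it you would need an argument of that flavor, not a stronger partition lemma.
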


\bibliography{partnership}

\appendix

\section{Missing Proofs of Section \ref{sec-bilateral}}

\subsection*{Proof of Proposition \ref{prop-bilateral-distribution-info}}
We start with the first part. Consider the following distribution $D_s$ of the seller: with probability $\frac 1 2$, $v_s$ gets a value (uniformly at random) in $(0,\epsilon)$. With probability $\frac 1 2$, $v_s$ gets a value (uniformly at random) between $(1,1+\epsilon)$. Observe that the median of $D_s$ is $1$.

There are two possible cases, depending on the trade price $r$:
\begin{enumerate}
\item $r \leq 1$: let $v_b=\infty$ with probability $1$. The optimal solution always sells the item to the buyer, but the mechanism will sell the item with probability $\frac 1 2$. We get an approximation of $2$ since $v_b >> v_s$.

\item $r > 1$: let $v_b=1-\epsilon$ with probability $1$. The value of the optimal solution is at least $1-\epsilon$ (always sell the item to the buyer). However, the mechanism sells the item only when $v_s \in (0,\epsilon)$, which happens with probability $\frac 1 2$. The approximation is $2$ also in this case.
\end{enumerate}

We now prove the second part. Consider the following distribution $D_b$ of the buyer: with probability $0.99$, $v_b$ gets a value (uniformly at random) in $(1,1+\epsilon)$. With probability $0.01$, $v_b$ gets a value (uniformly at random) between $(100,100+\epsilon)$.

There are several possible cases, depending on the trade price $r$.
\begin{enumerate}
\item $r \leq 1+\epsilon$: let $v_s=1+\epsilon$ with probability $1$. The optimal solution sells the item to the buyer with probability $\frac 1 100$ and the expected welfare is about $2$. The mechanism that post the price $r$ however will never sell the item and will generate an expected welfare of $1+\epsilon$.

\item $1+\epsilon< r \leq 100+\epsilon$: let $v_s=0$ with probability $1$. The expected value of the optimal solution is about $2$ (always sell the item to the buyer). However, the mechanism sells the item only when $v_b \in (100,100+\epsilon)$, so the expected welfare is about $1$. The approximation is $2$ also in this case.

\item $r>100+\epsilon$: let $v_s=0$. The expected value of the optimal solution is $2$, but the mechanism achieves welfare of $0$.
\end{enumerate}

\subsection*{Proof of Proposition \ref{prop-bilateral-lowerbound}}

Consider a setting where the seller has an exponential distribution $F_s(v_s)=1-e^{-sv_s}$ and the buyer's distribution is $F_b(v_b)=1-e^{-bv_b}$.

The expected social welfare is simply the maximum of two exponential variables. The minimum of two exponential variables is distributed exponentially with parameter $s+b$, therefore its expectation is $\frac{1}{s+b}$. We get that:
$E[max\{v_s,v_b\}]=E[v_s+v_b]-E[min\{v_s,v_b\}]=1/s+1/b-1/(s+b)$.

For a given trade price $p$, the welfare is:
\begin{align*}
& e^{-bp}\left( \left(1-e^{-sp}\right)\left(p+\frac{1}{b}\right)+
e^{-sp}\left(p+\frac{1}{s}\right)\right)+\left(1-e^{-bp}\right)\frac{1}{s}\\
% = & e^{-bp}\left(p+\frac{1}{b}\right) + e^{-(s+b)p}\left( \frac{1}{s} - \frac{1}{b}\right) + \left(1-e^{-bp}\right)\frac{1}{s}
\end{align*}
Where we use that for the exponential distribution, $E[x|x>p]=p+E[x]$.
The welfare is maximized when the partial derivative by $p$ equals zero, i.e., when $(b^2-s^2)e^{-sp}=b^2-b^2sp$. For $s=1$ and $b=1/2$, we get that the optimal price is about $1.603$ that gives welfare of $2.0775$ compared to optimal welfare of $\frac{7}{3}$.

%\subsection*{Proof of Proposition \ref{prop-partnership-lower}}

\section{A $\frac {55} {28}$-Approximation Mechanism for Partnership Dissolving}

In this section we present a mechanism for bilateral trade that provides an approximation ratio which is strictly better than $2$. Denote by $M_b$ the median of the distribution of the buyer and by $M_s$ the median of the distribution of the seller.

\begin{theorem}
There exists a $\frac {55} {28}$-approximation for the bilateral trade problem.
\end{theorem}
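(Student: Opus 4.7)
My mechanism will post a single fixed trade price $p^{\ast} = p^{\ast}(D_s, D_b)$ computed from both distributions. Since the trade price does not depend on the reported values, dominant-strategy truthfulness, ex-post individual rationality and strong budget balance are automatic; the entire argument is about choosing $p^{\ast}$ and bounding the expected welfare against $\mathrm{OPT}$. I would split into two regimes according to the relative order of the medians, noting that $M_b$ enters only as a side channel of distributional information.

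In the main regime $M_s \leq M_b$, I would appeal to McAfee's observation that \emph{any} $p \in [M_s, M_b]$ captures at least half of the expected gain from trade, yielding the first lower bound
\[
\mathrm{ALG}(p) \;\geq\; \mathbb{E}[v_s] \;+\; \tfrac{1}{2}\,\mathbb{E}[(v_b - v_s)^{+}].
\]
Combined with $\mathrm{OPT} = \mathbb{E}[v_s] + \mathbb{E}[(v_b - v_s)^{+}]$ this alone gives a $2$-approximation, tight only when $\mathbb{E}[v_s]$ is negligible compared to the expected GFT. To strictly improve on $2$, I would add a second lower bound, in the spirit of the $ALG_1$ analysis of the Median Mechanism: whenever $p \geq M_s$, the event $\{v_s \geq p\}$ has probability at most $1/2$ but contributes at least $p$ to the retained welfare, which by independence is separated from the trading event. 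Balancing these two lower bounds and optimizing over $p \in [M_s, M_b]$ (for instance, $p^{\ast} = \alpha M_s + (1-\alpha) M_b$ with $\alpha$ tuned by the ratio) then gives a bound strictly below $2$. The regime $M_s > M_b$ is handled by symmetry (swap seller/buyer roles and post $p^{\ast} = M_b$), which beats the median mechanism's $2$-approximation on exactly the instances where the Median Mechanism's worst case fails, namely those with $v_b$ concentrated just above $v_s$.

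The principal obstacle will be matching the constant $\tfrac{55}{28}$ rather than some nearby value. I expect to reduce to a finite-dimensional optimization by arguing that the worst-case $D_s, D_b$ are supported on at most two points each: the mechanism's behavior depends on $(D_s, D_b)$ only through $M_s, M_b, \mathbb{E}[v_s], \mathbb{E}[v_b]$ and the four probability masses $\Pr[v_s \gtrless p^{\ast}]$, $\Pr[v_b \gtrless p^{\ast}]$, so by an extreme-point argument we may restrict to two-point distributions. This makes the worst-case ratio a closed-form linear-fractional program in a handful of parameters (positions and weights of the two support points of each distribution, together with $\alpha$); the calculation balancing the McAfee-type bound against the retained-seller-value bound yields exactly $\tfrac{55}{28}$. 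The rest of the proof is a careful but routine case split verifying that every realization of the joint distribution lies in one of a few explicitly analyzable regions of the $(v_s, v_b)$-plane relative to $p^{\ast}$, $M_s$, and $M_b$.
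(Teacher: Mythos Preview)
Your proposal has a real gap in the regime $M_s > M_b$, which you dispatch ``by symmetry.'' There is no such symmetry: the seller holds the item, so the no-trade outcome contributes $v_s$ (not $v_b$) to welfare, and the paper in fact remarks that posting the buyer's median alone can fail to give any bounded ratio. More importantly, the constant $\tfrac{55}{28}$ arises \emph{precisely} from the $M_s > M_b$ case; the $M_s \le M_b$ case is handled with the strictly better ratio $\tfrac{13}{7}$. So the part you treat as an afterthought is the bottleneck, and your one-line argument for it does not work.

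The paper's argument for $M_s > M_b$ is nothing like an extreme-point reduction to two-point distributions. It couples instances into quadruples $(x,x',y,y')$ by matching each below-median seller value $x$ to the seller value $x'$ at quantile $p_x + \tfrac12$ (similarly for the buyer), splits the quadruples into two types according to whether $y'$ falls below or above $M_s$, and lets $O_1, O_2$ be their contributions to $\mathrm{OPT}$. A case analysis over the four instances in each quadruple shows that posting $M_s$ recovers at least $\tfrac{O_1}{2} + \tfrac{7O_2}{13}$ while posting $M_b$ recovers at least $\tfrac{2O_1}{3}$; the mechanism posts whichever of $M_s, M_b$ yields higher expected welfare, and minimizing $\max\bigl(\tfrac{O_1}{2}+\tfrac{7O_2}{13},\,\tfrac{2O_1}{3}\bigr)$ subject to $O_1+O_2=1$ gives exactly $\tfrac{28}{55}$. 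Even in the easier regime $M_s \le M_b$ the paper does not optimize over $p \in [M_s, M_b]$ as you suggest: it branches on the seller's \emph{mean} $r = E[v_s]$, using McAfee when $r \ge \tfrac{1}{13}$ and otherwise posting the price $3r$ (which need not lie between the medians) and bounding the welfare via Markov's inequality.
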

For clarity of presentation we normalize the value of the optimal solution OPT to be $1$. The mechanism itself simply chooses a trade price $p$ according to the following rule:
\begin{itemize}
\item If $M_b\geq M_s$:
\begin{itemize}
\item If expected value of the seller is greater than $1/13$, then $p \in [M_s,M_b]$. Else, $p=3/13$.
\end{itemize}
\item If $M_b\geq M_s$:
\begin{itemize}
\item Let $p$ be either $M_s$ or $M_b$, whichever gives a better expected welfare.
\end{itemize}
\end{itemize}

One part of the proof relies on McAfee's $2$ approximation for the gain for trade \cite{mcA08}. In Appendix \ref{sec-combinatorial-gft} we provide an alternative combinatorial proof for this result. We will also use the analysis of this  combinatorial proof in the proof of the theorem.

In the proof we let $x$ denote the random variable denoting the value of the item for seller, and $y$ be random variable denoting the value of the item for the buyer. We divide the proof to two cases, depending on the relationship between $M_b$ and $M_s$.

\subsection*{Case 1: $M_b\geq M_s$}\label{subsec-b-is-bigger}

We show that in this case we get an approximation ratio of $\frac {13} 7$. Let $r=E[x]$ (the expected value of the seller). We present two mechanisms that provides different approximation ratios that depend on $r$, and show that at least one them gives the guaranteed approximation ratio.

\begin{lemma}\label{lemma-mcafee-welfare}
If $M_b\geq M_s$ then the expected welfare provided by McAfee's mechanism is $\frac 1 2 +\frac r 2$.
\end{lemma}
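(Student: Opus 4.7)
The plan is to express McAfee's welfare as the sum of two pieces: a base contribution of $E[x]$ (the seller always has value $x$ for the item, which becomes the baseline if nothing happens) plus the gain from trade that the mechanism actually realizes. Concretely, trade occurs iff $x \leq p \leq y$, in which case social welfare is $y = x + (y-x)$; otherwise welfare is $x$. So the welfare equals $x + (y-x)\cdot \mathbf{1}[x\leq p\leq y]$, and taking expectations gives $E[\mathrm{welfare}] = r + \mathrm{GFT}_{\mathrm{alg}}$, where $\mathrm{GFT}_{\mathrm{alg}}$ is the expected gain from trade realized by McAfee's mechanism.

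Next I would compute the optimal gain from trade. Since the welfare of the efficient allocation is $\max(x,y) = x + \max(y-x,0)$, taking expectations and using the normalization $E[\max(x,y)]=1$ yields
\begin{equation*}
E[\max(y-x,0)] \;=\; 1 - r.
\end{equation*}
This is the first-best gain from trade.

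Finally, I would invoke McAfee's theorem (which is proved combinatorially in Appendix \ref{sec-combinatorial-gft}): under the assumption $M_b \geq M_s$, any posted price $p\in[M_s,M_b]$ recovers at least half of the first-best gain from trade, i.e., $\mathrm{GFT}_{\mathrm{alg}} \geq \tfrac{1}{2}(1-r)$. Plugging this into the identity from the first step gives
\begin{equation*}
E[\mathrm{welfare}] \;\geq\; r + \tfrac{1}{2}(1-r) \;=\; \tfrac{1}{2} + \tfrac{r}{2},
\end{equation*}
which is the claimed bound. No step is really an obstacle here since McAfee's theorem is taken as given; the only subtlety is being careful that the $M_b \geq M_s$ hypothesis is exactly what triggers McAfee's guarantee (otherwise a posted price between the medians need not beat the trivial welfare at all), and that the normalization $OPT = 1$ is what converts the first-best gain from trade into the clean expression $1-r$.
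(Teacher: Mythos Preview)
Your proof is correct and follows essentially the same approach as the paper: both invoke McAfee's gain-from-trade guarantee $ALG - r \geq \tfrac{1}{2}(OPT - r)$ and rearrange using the normalization $OPT=1$. Your version is simply more explicit in unpacking why $ALG - r$ equals the realized gain from trade and $OPT - r$ equals the first-best gain from trade, whereas the paper states the inequality directly and rearranges in one line.
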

\begin{proof}
McAfee \cite{mcA08} shows that $ALG-r \geq  \frac 1 2 (OPT-r)$. This implies that $ ALG \geq  \frac {OPT} 2 +\frac r 2$, as needed.
\end{proof}

The second mechanism is the following: for $t>1$, given $r=E[x]$ as defined above, the \emph{$t$-threshold} sets a trade price of $t\cdot r$. The next lemma analyzes the approximation ratio of the $t$-threshold mechanism (the analysis holds for all $M_s$ and $M_b$):

\begin{lemma}\label{lemma-small-r}
If $t < \frac {1-r} {2r}$ then the expected welfare of the allocation that the $t$-threshold mechanism produces is at least $1+r-\frac 1 t+\frac r t-t\cdot r$.
\end{lemma}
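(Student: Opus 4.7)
The plan is to compute the expected welfare $ALG$ directly, exploiting the independence of the seller's value $x$ (with $E[x]=r$) and the buyer's value $y$, together with the fact that trade occurs precisely on the event $\{x \leq tr\} \cap \{y \geq tr\}$. Writing $\alpha = \Pr[x \leq tr]$, $\beta = \Pr[y \geq tr]$, $\mu = E[x\mathbf{1}_{x \leq tr}]$, and $\nu = E[y\mathbf{1}_{y \geq tr}]$, the welfare equals $y$ on the trade event and $x$ otherwise, so
\[
ALG \;=\; E[x] - E[x\mathbf{1}_{\text{trade}}] + E[y\mathbf{1}_{\text{trade}}] \;=\; r + \nu\alpha - \mu\beta,
\]
where the last equality uses independence.

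Next I would derive four bounds. First, Markov's inequality applied to $x$ yields $\Pr[x > tr] \leq 1/t$, hence $\alpha \geq 1 - 1/t$ (this is meaningful because $t>1$). Second, the pointwise estimate $x\mathbf{1}_{x \leq tr} \leq tr\cdot\mathbf{1}_{x \leq tr}$ gives $\mu \leq tr\cdot\alpha$. Third, from the normalization $OPT = E[\max(x,y)] = 1$ together with $\max(x,y) \leq x + y$, we obtain $E[y] \geq 1 - r$. Fourth, truncating the mean of $y$ gives
\[
\nu \;=\; E[y] - E[y\mathbf{1}_{y < tr}] \;\geq\; (1-r) - tr(1-\beta).
\]

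Substituting these into $ALG = r + \nu\alpha - \mu\beta$, the $\beta$-terms cancel exactly:
\[
ALG \;\geq\; r + \bigl((1-r) - tr(1-\beta)\bigr)\alpha - tr\,\alpha\beta \;=\; r + \alpha(1 - r - tr).
\]
The hypothesis $t < (1-r)/(2r)$ comfortably implies $1 - r - tr > 0$, so I can invoke the lower bound $\alpha \geq 1 - 1/t$ to conclude
\[
ALG \;\geq\; r + \Bigl(1 - \tfrac{1}{t}\Bigr)(1 - r - tr) \;=\; 1 + r - \tfrac{1}{t} + \tfrac{r}{t} - tr,
\]
which is the claimed inequality. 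The only delicate step is the lower bound $\nu \geq (1-r) - tr(1-\beta)$, which is where the normalization $OPT=1$ enters through $E[y] \geq 1-r$; the happy cancellation of the $\beta$-terms then removes any need for further information about the buyer's distribution, and everything else is algebra.
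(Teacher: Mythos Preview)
Your proof is correct and follows essentially the same approach as the paper: the identity $ALG=r+\nu\alpha-\mu\beta$ is exactly the paper's four-case decomposition written compactly, and the bounds via Markov, $E[y]\geq 1-r$, and truncated means are the same ingredients. Your version is slightly cleaner—the exact cancellation of the $\beta$-terms avoids the paper's step of dropping $pq\,E[x\mid x<tr]$ and invoking $q\leq 1$, and incidentally yields the tighter intermediate bound $r+\alpha(1-r-tr)$ (so the monotonicity in $\alpha$ only needs $t<(1-r)/r$ rather than the stated $t<(1-r)/(2r)$), but the final inequality is identical.
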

\begin{proof}
First observe that it trivially holds that $OPT \geq E[x]+E[y]$. Hence, $E[y]\geq 1-r$. Let $p=\Pr[x< t\cdot r]$, and let $q=\Pr[y< t\cdot r]$. By Markov inequality, $p\geq 1-1/t$. We can now write:
\begin{align*}
ALG = & p\cdot q\cdot E[x|x<tr]+(1-p)\cdot q \cdot E[x|x\geq tr]+p\cdot (1-q)\cdot E[y|y>tr]+(1-p)(1-q)\cdot E[x|x\geq tr]\\
=&  p\cdot q\cdot E[x|x<tr]+p\cdot (1-q)\cdot E[y|y>tr]+(1-p)\cdot E[x|x\geq tr]
\end{align*}

Since $E[y]=q\cdot E[y | y\leq tr] + (1-q)\cdot E[y | y> tr]$ we have that
$(1-q)\cdot E[y|y>tr]\geq E[y]-q\cdot t\cdot r$.  We continue in our effort to give a lower bound on ALG:
\begin{align*}
\geq &  p\cdot q\cdot E[x|x<tr]+p\cdot (E[y]-q\cdot t\cdot r)+(1-p)\cdot E[x|x\geq tr]\\
\geq & p\cdot E[y]-p\cdot q\cdot t\cdot r+(1-p)\cdot E[x|x\geq tr]
\end{align*}
Since $q\leq 1$ and since $ E[x|x\geq tr]\geq tr$:
\begin{align*}
\geq & p\cdot E[y]-p\cdot t\cdot r+(1-p)\cdot tr\\
\geq & t\cdot r+p\cdot (1-r)-2p\cdot t\cdot r
\end{align*}
Observe that whenever $t\leq \frac {1-r} {2r}$ we have that $2t\cdot r\leq  1-r$. Therefore, in this regime we the expression is minimized when $p$ gets the lowest value possible:
\begin{align*}
\geq & t\cdot r+(1-\frac 1 t)\cdot (1-r)-2(1-\frac 1 t)\cdot t\cdot r\\
=& 1+r-\frac 1 t+\frac r t-t\cdot r
\end{align*}
\end{proof}
To finish the case where $M_b\geq M_s$, consider first the case where $r< \frac 1 {13}$. In this case we choose $t=3$, and apply Lemma \ref{lemma-small-r}. The approximation ratio that we get is $\frac {13} 7$. On the other hand, if $r \geq \frac 1 {13}$, we apply Lemma \ref{lemma-mcafee-welfare} and get again an approximation ratio of $\frac {13} 7$.

\subsection*{Case II: $M_b<M_s$}

We show that there is a trade price that provides a $\frac {156} {255}$-approximation to the welfare for this case. We assume that $r>\frac 1 {13}$ -- otherwise Lemma \ref{lemma-small-r} guarantees a $\frac {13} 7$-approximation. For simplicity, in the proof of this case we assume that both distributions are discrete and that every atom in the support has the same probability. Both assumptions can be removed with some effort but the proof becomes a bit messier.

We start with some notation. For every element $x$ in the support of the seller's distribution $D_s$, let $p_x=\Pr_{q\sim D_s}[q<x]$. Consider an element $x<M_s$ in the support of $D_s$ (the distribution of the seller). We match $x$ to the (unique) element $x'$ in the support of $D_s$ with $p_{x'}=p_x+0.5$.
Using similar notation, we match every element $y<M_b$ in the support of $D_b$ to $y'$ in the support of $D_b$ with $p_{y'}=p_y+0.5$. Partition the set of all instances to 4-tuples (quadruples). We will have two types of quadruples:
\begin{enumerate}
\item Tuples where $M_b<y'<M_s$.
\item Tuples where $M_b<M_s<y'$.
\end{enumerate}

Fix some optimal mechanism. and denote by $O_1$ the contribution of type $1$ tuples to the optimum, and by $O_2$ the contribution of type $2$ tuples. Observe that $OPT=O_1+O_2$.

\begin{lemma}
The mechanism that posts a trade price $M_s$ provides an approximation ratio of $\frac {O_1} 2+\frac {7 O_2} {13}$.
\end{lemma}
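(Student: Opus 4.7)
My plan is to use the quadruple partitioning already set up: each quadruple consists of the four instances $(x,y), (x,y'), (x',y), (x',y')$ with matched seller values $x < M_s < x'$ and matched buyer values $y < M_b < y'$, and, by the uniform-atom assumption, each instance of a quadruple carries equal probability $p$. A quadruple is \emph{type 1} if $y' < M_s$ and \emph{type 2} if $y' \ge M_s$; the total contribution to OPT splits as $O_1 + O_2$ across the two types, so it suffices to bound the mechanism's welfare on each type separately and sum.

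For a type 1 quadruple, $y' < M_s$ forces $v_b < M_s$ in every one of the four instances, so the posted price $M_s$ is never met by the buyer and no trade occurs; the mechanism's welfare on each instance therefore equals the seller's value. Direct computation gives $ALG_\square = p(2x + 2x')$ while $OPT_\square = p(\max(x,y) + \max(x,y') + 2x')$, and since $\max(x,y), \max(x,y') \le M_s < x'$, the inequality $ALG_\square \ge OPT_\square / 2$ reduces to $2x + x' \ge M_s$, which follows from $x' > M_s$. Summing over all type 1 quadruples yields the $O_1/2$ term.

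For a type 2 quadruple only the instance $(x, y')$ satisfies $v_s \le M_s \le v_b$, and only there does the mechanism trade; hence $ALG_\square = p(x + y' + 2x')$ while $OPT_\square = p(\max(x,y) + y' + x' + \max(x',y'))$. A case split on $y'$ versus $x'$ shows that when $y' \le x'$ one has $\max(x',y') = x'$, $\max(x,y) \le M_s \le x'$, and the per-quadruple ratio is at least $2/3$; but when $y' > x'$ one has $\max(x',y') = y'$ and the ratio can degrade toward $1/2$ for large $y'$. To recover the $7/13$ factor in this regime, I plan to combine the per-quadruple inequality with the pointwise bound that the mechanism's welfare is at least $v_s$ (hence $ALG \ge E[v_s] = r$ globally), leveraging the Case II hypothesis $r > 1/13$ and the normalisation $O_1 + O_2 = 1$ via an appropriate convex combination of the two lower bounds.

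The main obstacle is exactly this aggregate step: the per-quadruple bound in the $y' > x'$ sub-regime of type 2 is only $1/2$, so the full $7/13$ factor cannot be read off a single quadruple and must be extracted globally -- balancing the strong $2/3$ per-quadruple bound in the $y' \le x'$ sub-regime against the weaker bound in the hard sub-regime, and using the slack $r > 1/13$ from Case II (together with the fact that quadruples with very large $y'$ must carry correspondingly small mass $p$, since OPT is normalised to $1$) to close the remaining gap.
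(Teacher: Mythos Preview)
Your type~1 analysis is fine and matches the paper's in spirit (the paper simply bounds $OPT_\square \le 4x'$ and $ALG_\square \ge 2x'$, but your sharper computation reaches the same conclusion).

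The genuine gap is in type~2. Your per-quadruple welfare-ratio approach correctly hits a floor of $\tfrac12$ in the sub-case $y'>x'$, and your proposed rescue---combining with the global pointwise bound $ALG\ge E[v_s]=r$ via ``an appropriate convex combination''---does not work. Concretely, the two lower bounds you have available are $ALG\ge \tfrac{O_1}{2}+\tfrac{O_2}{2}=\tfrac12$ (from the per-quadruple analysis) and $ALG\ge r>\tfrac{1}{13}$; but the second is strictly weaker than the first, so no convex combination of them exceeds $\tfrac12$, whereas the target $\tfrac{O_1}{2}+\tfrac{7O_2}{13}=\tfrac12+\tfrac{O_2}{26}$ is strictly larger than $\tfrac12$ whenever $O_2>0$. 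Restricting the pointwise bound to type~2 gives $ALG_2\ge rP_2$ (where $P_2=\Pr[\text{type 2}]$), but combining this with $ALG_2\ge O_2/2$ still cannot yield $ALG_2\ge 7O_2/13$ in general, since that would require $rP_2\gtrsim O_2$, which fails when $E[\max(v_s,v_b)\mid\text{type 2}]$ is large.

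What the paper does instead is switch to the \emph{gain-from-trade} viewpoint on type~2 quadruples. One checks (this is exactly the combinatorial McAfee argument) that on every type~2 quadruple the mechanism's gain from trade, namely $y'-x$, is at least half of the optimal gain from trade $\max(y-x,0)+(y'-x)+\max(y'-x',0)$; the key inequality is $x'\ge y$, which holds since $x'>M_s>M_b>y$. Adding back the seller's baseline value gives the per-quadruple \emph{additive} bound
\[
ALG_\square \;\ge\; \tfrac12\,OPT_\square \;+\; \tfrac12\,S_\square,
\]
where $S_\square=2x+2x'$ is the seller's value on the quadruple. Summing over type~2 and using independence (the type depends only on $y'$, so $S_2=rP_2$) yields $ALG_2\ge \tfrac{O_2}{2}+\tfrac{rP_2}{2}$, and then $r\ge\tfrac{1}{13}$ delivers the $\tfrac{7}{13}$ factor. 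Your ratio-based analysis throws away precisely this additive $S_\square/2$ term, and the global bound $ALG\ge r$ is too coarse to recover it.
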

\begin{proof}
We first show that we extract an expected welfare of at least $\frac {O_1} 2$ from type $1$ instances. In this case we have that $x'>M_s$ and that $y',y,x<M_s$. As usual we consider four instances: $<x,y>$, $<x,y'>$, $<x',y'>$, $<x',y>$. Since $x'$ is the largest value, the sum of the optimal solutions in all these four instances is trivially bounded from above by $4x'$. Now observe that in the last two instances the mechanism does not sell the item, so the sum of all the solutions that the mechanism outputs is bounded from below by $2x'$. This gives that we get a $\frac {O_1} 2$ from type $1$ instances.

Next we claim that we get a $2$ approximation for the gain from trade of type $2$ instances. Our constraints imply that the following quadruples as possible:
\begin{itemize}
\item $x'$ and $y'$ can take arbitrary values, as long as both are at least $M_s\geq M_b$.
\item $y\leq M_b\leq M_s$.
\item $x\leq M_s$.
\end{itemize}
It is routine to verify that all of these tuples are analyzed in Appendix \ref{sec-combinatorial-gft}. Thus we get a $2$ approximation for the gain from trade for all type $2$ tuples. Now, since $E[x]=E[x|\hbox{type }2]$ (because the distributions are independent, and whether a tuple is type $1$ or type $2$ is only a function of $y'$), we get that we extract at least $(\frac 1 2 +\frac r 2)\cdot O_2$ from type $2$ instances (as in Lemma \ref{lemma-mcafee-welfare}). Recall that by our assumption $r\geq \frac 1 {13}$, which finishes the proof of the lemma.
\end{proof}

\begin{lemma}
The mechanism that posts a trade price of $M_b$ provides an expected welfare of at least $\frac {2O_1} 3 $.
\end{lemma}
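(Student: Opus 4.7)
The plan is to establish the bound pointwise on every type 1 quadruple and then sum, mirroring the structure of the previous lemma. Recall that a type 1 quadruple consists of the four instances $(x,y)$, $(x,y')$, $(x',y)$, $(x',y')$ with $x < M_s \leq x'$, $y < M_b \leq y'$, and the additional defining constraint $y' < M_s$.

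First I would pin down which of the four instances trade under the posted price $M_b$. Trade requires $v_s \leq M_b \leq v_b$; since $y < M_b$ and $x' > M_s > M_b$, the only possible trade is in $(x, y')$, and it happens iff $x \leq M_b$. Writing out the welfare per quadruple then gives $OPT_{\text{quad}} = \max(x,y) + \max(x,y') + 2x'$ (using $x' > y'$), and
\[ ALG_{\text{quad}} \;=\; \begin{cases} x + y' + 2x' & \text{if } x \leq M_b, \\ 2x + 2x' & \text{if } x > M_b. \end{cases} \]

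Next I would verify $ALG_{\text{quad}} \geq \tfrac{2}{3} OPT_{\text{quad}}$ by a short case split. When $x \leq M_b$, one has $x \leq y'$, so the mechanism matches OPT on the two ``buyer-wins'' instances; the only possible slack comes from $(x,y)$ when $y > x$, and the required inequality $3x + 2x' + y' \geq 2y$ follows from $2x' > 2M_b > 2y$. When $x > M_b$, the only deficit comes from $(x, y')$ when $y' > x$, and the required inequality $2x + x' \geq y'$ follows from $x' > M_s > y'$.

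The only (mild) obstacle is bookkeeping across these sub-cases; no probabilistic machinery beyond the quadruple decomposition is required. The ratio $\tfrac{2}{3}$ is tight, approached as $M_b \to 0$ with $x \to M_b^+$, $y' \to M_s^-$, $x' \to M_s^+$. Summing the pointwise bound over all type 1 quadruples (which partition the instance space with equal weight on each atom) then yields expected welfare at least $\tfrac{2}{3} O_1$, as claimed.
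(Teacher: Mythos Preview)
Your proof is correct and follows essentially the same route as the paper's: a pointwise $\tfrac{2}{3}$ bound on each type~1 quadruple, with the same case split on whether $x \le M_b$ or $M_b < x < M_s$. The only cosmetic difference is that the paper bounds the mechanism's welfare in the ambiguous instances conservatively by $\min(x,y)$ (resp.\ $\min(x,y')$) rather than by the exact value $x$, and thereby records sub-case ratios $4/3$ and $3/2$ before taking the worst case; your direct verification of $3\cdot ALG_{\text{quad}}\ge 2\cdot OPT_{\text{quad}}$ reaches the same conclusion.
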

\begin{proof}
Next we consider type $1$ tuples when using this price. By definition of $x$,$y$,$x'$,$y'$ and type $1$ tuples, we have that $x'>M_s$, $M_b<y'<M_s$, $y<M_b$ and $x<M_s$. We consider two cases, depending on whether $x<M_b$ or not.

\begin{enumerate}
\item $x<M_b$. There are four possible instances:
\begin{enumerate}
\item In $<x,y>$ the optimal solution has value $\max(x,y)$ and we assume conservatively that the mechanism outputs a solution with value $\min(x,y)$.
\item In $<x',y>$ the optimal solution has value $x'$ is and the mechanism outputs a solution with value $x'$.
\item In $<x',y'>$ the optimal solution has value $x'$ and the mechanism outputs a solution with value $x'$.
\item In $<x,y'>$ the optimal solution has value $y'$ and the mechanism outputs a solution with value $y'$.
\end{enumerate}
One can check that the solution of the mechanism possibly differs from the solution of the optimal solution only in the first %last
 case. Our loss in this case is bounded by $y<y'<x'$. Since the optimal aggregated welfare from all four instances is $2x'+y'+\max(x,y)$ and ALG's aggregated welfare is $2x'+y'+\min(x,y)$, we have that the expected approximation ratio is $\frac 4 3$ for the quadruple.

\item $M_s>x>M_b$. There are four possible instances:
\begin{enumerate}
\item In $<x,y>$ the optimal solution has value $x$ and the mechanism outputs solution with value $x$.
\item In $<x',y>$ the optimal solution has value $x'$ and the mechanism outputs solution with value $x'$.
\item In $<x',y'>$ the optimal solution has value $x'$ and the mechanism outputs solution with value $x'$
\item In $<x,y'>$ the optimal solution has value $\max(x,y')$ and we conservatively assume that the mechanism outputs solution with value $\min(x,y')$.
\end{enumerate}
\end{enumerate}
The solution of the mechanism possibly differs from the solution of the optimal solution only in the last case. In this case our loss is bounded by $x'$. Since the optimal aggregated welfare from all four instances is $2x'+x+\max(x,y')$ and ALG's aggregate welfare is $x+2x'+\min(x,y')$, we have that the expected approximation ratio is $\frac 3 2$ in this case.
\end{proof}

We therefore have two mechanisms that give two different guarantees: one provides an expected welfare of at least $\frac {O_1} {2}+\frac {7O_2} {13}$ and the other provides an expected welfare of at least $\frac {2O_1} {3} $. Recalling that $O_1+O_2=1$, we have that at least one of these mechanisms guarantees an approximation ratio of $\frac {28} {55}$.

\subsection{A Combinatorial Proof for McAfee's Mechanism}\label{sec-combinatorial-gft}

%\section{Maximizing Gain From Trade in Partnership Dissolving}

%\subsection{Mcafee's}
%
%\begin{definition}
%GFT
%\end{definition}
%
%\begin{observation}
%If a mechanism is a $c$ approximation for the expected gain from trade, it is also a $c$ approximation to the expected optimal social welfare.
%\end{observation}
%\begin{proof}
%$r$ is the original allocation, $r'$ is our mechanism, $r*$ achieves optimal gain from trade (and efficiency).
%We have $\sum v_i r^{'}_i- \sum v_i r_i \geq \frac{1}{2}\big( \sum v_i r^{*}_i- \sum v_i r_i \big)$.
%If we add $\sum v_i r_i$ to the LHS and $\frac{1}{2}\sum v_i r_i$ to the RHS we still have the inequality, thus:
%$\sum v_i r^{'}_i \geq \frac{1}{2}\big( \sum v_i r^{*}_i \big)$
%
%(Converse is clearly false. positive approx to welfare can gain negative GFT.)
%\end{proof}
Let the seller's value be $x$ which is drawn from some distribution $D_s$ and the buyer's value be $y$, drawn from some independent distribution $D_b$. The expected \emph{gain from trade} is $E_{x\sim D_s,y\sim D_b}[\max(y-x,0)]$. McAffee shows the following:
\begin{theorem}[\cite{mcA08}]
Denote the median of $D_s$ by $M_s$ and the median of $D_b$ by $M_b$. If $M_b\geq M_s$ then setting a trade price of $M_s$ provides a $2$ approximation to the expected gain from trade.
\end{theorem}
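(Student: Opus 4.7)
The plan is to give a combinatorial quadruple-pairing proof in the same style as the Case II analysis above, but adapted to the regime $M_b \geq M_s$. I will assume (WLOG, by a standard discretization) that both $D_s$ and $D_b$ are discrete with equal-mass atoms. For each atom $x$ in the lower half of the seller's support, pair it antithetically with $x'$ in the upper half, defined by $\Pr[X < x'] = \Pr[X < x] + 1/2$, so that $x \leq M_s \leq x'$; pair $y < M_b$ with $y' \geq M_b$ analogously. By the product structure of $D_s \times D_b$, each quadruple $(x, x', y, y')$ corresponds to four instances $(x,y), (x,y'), (x',y), (x',y')$ of equal probability mass, and it suffices to prove $OPT_{\mathrm{quad}} \leq 2\cdot ALG_{\mathrm{quad}}$ per quadruple.

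Since the posted price is $M_s$, trade in an instance $(u,v)$ happens iff $u \leq M_s \leq v$. Instance $(x, y')$ always trades (as $y' \geq M_b \geq M_s \geq x$), contributing $y' - x$ to $ALG$; instances $(x', y)$ and $(x', y')$ generically do not trade (since $x' \geq M_s$); and $(x, y)$ trades iff $y \geq M_s$. I will split into two subcases based on whether $y$ lies above or below $M_s$.

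In Case A ($y \geq M_s$, which requires $M_b > M_s$ for the case to be nonempty), both $(x, y)$ and $(x, y')$ trade, so $ALG_{\mathrm{quad}} = (y-x) + (y'-x)$; the two remaining OPT terms contribute $\max(y-x', 0)$ and $\max(y'-x', 0)$, which I will bound by $y-x$ and $y'-x$ respectively using $x \leq x'$, giving $OPT_{\mathrm{quad}} \leq 2 \cdot ALG_{\mathrm{quad}}$. In Case B ($y < M_s$), only $(x, y')$ trades, so $ALG_{\mathrm{quad}} = y' - x$; the overhead is $\max(y-x, 0) + \max(y' - x', 0)$, because $\max(y-x', 0) = 0$ (since $y < M_s \leq x'$). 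I will telescope via $\max(y-x, 0) \leq M_s - x$ (using $y \leq M_s$) and $\max(y'-x', 0) \leq y' - M_s$ (using $x' \geq M_s$), whose sum is exactly $y' - x = ALG_{\mathrm{quad}}$. So $OPT_{\mathrm{quad}} \leq 2 \cdot ALG_{\mathrm{quad}}$ again. Summing across quadruples and applying linearity of expectation yields $E[OPT] \leq 2 \cdot E[ALG]$.

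The main obstacle I anticipate is Case B, where the telescoping bound relies delicately on the hypothesis $M_b \geq M_s$: this is what forces $y' \geq M_s$ and guarantees that $(x, y')$ trades in the first place. Without the median ordering, $y'$ could fall below $M_s$, killing that trade and leaving $ALG_{\mathrm{quad}} = 0$ against a positive $OPT_{\mathrm{quad}}$, which is exactly the regime where McAfee's $2$-approximation breaks down.
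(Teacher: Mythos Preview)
Your proof is correct and uses the same antithetic-quadruple decomposition as the paper's combinatorial argument. The only cosmetic difference is that the paper frames the per-quadruple accounting as matching each ``losing pair'' (an instance with $x<y$ but no trade) to a ``winning pair'' whose realized gain covers it, whereas you establish $OPT_{\mathrm{quad}}\le 2\,ALG_{\mathrm{quad}}$ directly via the case split on $y\gtrless M_s$ and the telescope through $M_s$; the underlying inequalities are identical.
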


We now provide an alternative combinatorial proof for McAffee's theorem. For simplicity, in the proof we assume that both distributions are discrete and that every atom in the support has the same probability. Both assumptions can be removed with some effort but the proof becomes a bit messier.

We start with some notation. For every element $x$ in the support of distribution $D_s$, let $p_x=\Pr_{q\sim D_s}[q<x]$. Consider an element $x<M_s$ in the support of $D_s$. We match $x$ to the (unique) element $x'$ in the support of $D_s$ with $p_{x'}=p_x+0.5$. Using similar notation, we match every element $y<M_b$ in the support of $D_b$ to $y'$ in the support of $D_b$ with $p_{y'}=p_y+0.5$.

We denote instances by $<x,y>$ where $x$ is the value of the seller and $y$ is the
value of the buyer. A \emph{losing pair} is an instance $<x,y>$ where $x<y$ but the
mechanism does not make a sell. If $x<y$ but the mechanism does make a sell this is a winning pair.

We can have the following three types of losing pairs, and we match each type with a winning pair. This way, all losing
pairs are matched to a winning pair.
\begin{enumerate}
\item \label{losing1} $<x,y>$ where $x<M_s$ and $y<M_b$. We match it to $<x,y'>$.
\item \label{losing2} $<x',y'>$ where $x'>M_s$ and $y'>M_b$. We match it to $<x,y'>$ as well.
\item \label{losing3} $<x',y>$ where $x',y$ are in $[M_s,M_b]$. If $x',y>p$, $<x',y>$ is matched to $<x,y>$. If $x',y < p$ then $<x',y>$ is matched to $<x',y'>$.
\end{enumerate}

We will show that the gain from trade from every winning pair is at least total gain from trade that would have been gained from the losing pairs matched to it.

Consider values $<x,y,x',y'>$ as in items \ref{losing1} and \ref{losing2}. $<x,y'>$ is a winning pair matched with two losing pairs $<x,y>$ and $<x',y'>$.
Since $y-x'<0$ (recall that $y<M_s$, $x'>M_b$), we have that $(y-x)+(y'-x')<y'-x$, and this what we wanted to show.

Consider values $<x,y,x',y'>$ as in item \ref{losing3}. Assume that the losing pair is $x',y>p$; Then, the winning pair is $<x,y>$ which is matched only to $<x',y>$, but $y-x\geq y-x'$. Similarly, assume that the losing pair is $x',y<p$; Then, the winning pair is $<x',y'>$ which is matched only to $<x',y>$, but $y'-x'\geq y-x'$.

%We now compare the gain from trade we lost from the losing pairs to our gain from the matched winning pair. Now observe that $(y-x)+(y'-x')<y'-x$. This is true since
%$y-x'<0$ (recall that $y<M_s$, $x'>M_b$). That is, our loss of gain from
%trade from the losing pairs is at most the gain from the matched
%winning pair.

Together, this gives us a $2$ approximation for the gain for trade.

\section{Missing Proofs Section \ref{sec-multiunit}}

\subsection*{Sketch of Proof of Lemma \ref{lemma-multiunit-no-substantial}}

We divide into several cases:
\begin{itemize}
\item \emph{There are two bidders $i$ and $j$ with $r_i,r_j\geq \frac 1 8$}: we select at random one player $i$ or $j$ to play $N_2$ and the remaining one to play $N_3$. The rest of the players will consist the group $N_1$. Observe that if either $v_i(o_i)\geq \frac {OPT} 6 $ or $v_j(o_j)\geq \frac {OPT} 6 $ then we are already done, since the current value for the endowment is at least $\frac 1 8$ of their contribution to the optimal welfare. Furthermore, no individually rational mechanism (and our mechanism in particular) can decrease the welfare.

We may therefore assume that the total contribution of bidders in $N_1$ to the optimal welfare is at least $\frac {2OPT} 3$. Running the mechanism (but now with $t=\frac 1 {16}$) and using the very similar analysis, give us welfare of at least $\frac {2OPT} 3\cdot \frac 1 {16}$ with probability $\frac 1 2$, hence the approximation ratio is $48$ in this case.

\item \emph{There is at most one bidder $i$ with $r_i\geq \frac 1 8$}: we show how to construct three substantial sets and therefore we can run the mechanism with no change.

Recall that by assumption $r_i\leq \frac 1 3$. Select two disjoint minimal sets of bidders $T_1$ and $T_2$ such that for each $l$, $\Sigma_{i'\in T_l}r_{i'}\geq \frac 1 4$ and $i\notin T_l$. Notice that by minimality and since for each bidder $i'\neq i$ we have that $r_{i'}\leq \frac 1 8$, we have that $\Sigma_{i'\in T_l}r_{i'}\leq \frac 3 8$. Notice that two such minimal sets exist since $\Sigma_{i'\neq i}r_{i'}\geq \frac 2 3$.

Let our three substantial sets be $T_1, T_2, N-T_1-T_2$. $T_1$ and $T_2$ are substantial by construction, and that $N-T_1-T_2$ is substantial since $\Sigma_{i'\in T_l}r_{i'}\leq \frac 3 8$ for each $l$.
\end{itemize}

\end{document}